\MyFrame\begin{minipage}{#1}}%
\theoremstyle{plain}
\newtheorem{theorem}{Theorem}
\newtheorem{lemma}{Lemma}
\newtheorem{claim}{Claim}
\newtheorem{corollary}{Corollary}
\theoremstyle{definition}
\newtheorem{definition}{Definition}
\theoremstyle{remark}
\newtheorem*{remark}{Remark}
\DeclareMathOperator{\suchthat}{\text{ }|\text{ }}
\begin{document}
\title{
On Randomized Memoryless Algorithms for the Weighted $k$-server Problem}
\author{Ashish Chiplunkar \qquad Sundar Vishwanathan\\
Department of Computer Science and Engineering\\
Indian Institute of Technology Bombay\\
Mumbai India\\
\texttt{\{ashishc, sundar\}@cse.iitb.ac.in}
}
\date{}
\maketitle

\begin{abstract}

The weighted $k$-server problem is a generalization of the $k$-server problem, wherein the cost of moving a server of weight $\beta_i$ through a distance $d$ is $\beta_i\cdot d$. On uniform metric spaces, this models caching with caches having different page replacement costs. We prove tight bounds on the performance of randomized memoryless algorithms for this problem on uniform metric spaces. We first prove that there is an $\alpha_k$-competitive memoryless algorithm for this problem, where $\alpha_k=\alpha_{k-1}^2+3\alpha_{k-1}+1$; $\alpha_1=1$. We complement this result by proving that no randomized memoryless algorithm can have a competitive ratio better than $\alpha_k$. 

To prove the upper bound of $\alpha_k$, we develop a framework to bound from above the competitive ratio of any randomized memoryless algorithm for this problem. The key technical contribution is a method for working with potential functions defined implicitly as the solution of a linear system. The result is robust in the sense that a small change in the probabilities used by the algorithm results in a small change in the upper bound on the competitive ratio. The above result has two important implications. Firstly, this yields an $\alpha_k$-competitive memoryless algorithm for the weighted $k$-server problem on uniform spaces. This is the first competitive algorithm for $k>2$, which is memoryless. For $k=2$, our algorithm agrees with the one given by Chrobak and Sgall \cite{ChrobakS04}. Secondly, this helps us prove that the Harmonic algorithm, which chooses probabilities in inverse proportion to weights, has a competitive ratio of $k\alpha_k$. 

The only known competitive algorithm for every $k$ before this work is a carefully crafted deterministic algorithm due to Fiat and Ricklin \cite{FiatR94}. This algorithm uses memory crucially, and their bound on its competitive ratio is $2^{4^k}$. Our algorithm is not only memoryless, but also has a considerably improved competitive ratio of $\alpha_k<1.6^{2^k}$. Further, the derandomization technique of Ben-David et al. \cite{Ben-DavidBKTW94} implies that there exists a deterministic algorithm for this problem with competitive ratio $\alpha_k^2<2.56^{2^k}$.
\end{abstract}

\section{Introduction}

The $k$-server problem of Manasse et al. \cite{ManasseMS88} is, arguably, the most extensively studied problem in the online setting. The large body of research around this problem is summarized in a beautiful survey by Koutsoupias \cite{Koutsoupias09}. In this problem, $k$ servers occupy points in a metric space. An adversary presents a sequence of requests, each of which is a point in the metric space. To serve the current request, the algorithm moves one of the servers to the requested point, incurring a cost equal to the distance traveled by the server. In the online model, an algorithm is required to serve the current request before the next request is revealed. A (randomized) online algorithm is said to be $c$-\textit{competitive} against an adversary, if it produces a solution, whose (expected) cost is at most $c$ times the cost of the solution produced by the adversary. 

A generalization of the $k$-server problem, proposed by Fiat and Ricklin \cite{FiatR94}, and called the weighted $k$-server problem, associates a weight with each server. The cost incurred in moving a server is equal to the product of its weight and the distance traveled. Introducing weights adds a new dimension to the $k$-server problem and presents new challenges. While a $(2k-1)$-competitive algorithm is known for the $k$-server problem \cite{KoutsoupiasP95}, the only competitive algorithms known for the weighted $k$-server problem are for uniform spaces \cite{FiatR94}, and for $k=2$ \cite{Sitters14}. On uniform spaces, this problem models caching with different types of caches, each having a different page replacement cost. Fiat and Ricklin \cite{FiatR94} point out the practical significance of such caches in optimizing both the overall write time, as well as the chip area occupied.

A randomized algorithm for the weighted $k$-server problem is said to be \textit{memoryless} if its behavior on a request is completely determined by the pairwise distances between the $k$ points occupied by its servers and the requested point. In other words, a memoryless algorithm for the weighted $k$-server problem with a given set of weights is specified by a function, which maps the ${k+1}\choose2$ distances to a probability distribution on the servers. In particular, on uniform metric spaces, a memoryless algorithm is completely specified by a probability distribution $p$ on the servers, where $p_i$ is the probability by which the $i^{\text{\tiny{th}}}$ server is shifted to the requested point, if that point is not already occupied by some server. The Harmonic algorithm is a memoryless algorithm, which moves the servers with probabilities inversely proportional to their weights.

For online problems modeling certain practical problems like caching, it is imperative that decisions are taken instantaneously. Ideally, we would like the algorithm to be memoryless. For the $k$-server problem, the Harmonic algorithm is known to be $O(k2^k)$-competitive on any metric space \cite{Grove91,BartalG00}. Additionally, Coppersmith et al. \cite{CoppersmithDRS93} proved that on \textit{resistive} metric spaces, there exists a $k$-competitive memoryless algorithm, in which the probabilities of moving the servers are determined by the \textit{resistive inverse} of the metric space. It hence came as a surprise when Chrobak and Sgall \cite{ChrobakS04} proved that no memoryless algorithm with a finite competitive ratio exists, even for the weighted $2$-server problem on the line metric (which is, in fact, resistive). Among other nice results in the same paper, Chrobak and Sgall \cite{ChrobakS04} give the only known competitive memoryless algorithm for uniform spaces: a $5$-competitive algorithm for $2$ servers, which they prove is optimal. We generalize their bounds and prove the following theorems. 

\begin{theorem}\label{thm_main_ub}
For every $k$, there exists an $\alpha_k$-competitive memoryless algorithm for the weighted $k$-server problem on uniform metric spaces against an online adaptive adversary, where $\alpha_k$ satisfies the recurrence: $\alpha_k=\alpha_{k-1}^2+3\alpha_{k-1}+1$ for $k>1$, and $\alpha_1=1$.
\end{theorem}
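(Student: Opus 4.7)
The plan is to proceed by induction on $k$. The base case $k=1$ is immediate: with one server, any algorithm must move that server to every request, matching OPT exactly, so $\alpha_1=1$. For the inductive step I would exhibit a specific memoryless algorithm by a probability distribution $p=(p_1,\dots,p_k)$ over the servers (where $p_i$ is the probability of moving server $i$ on a request to an empty point). The values of $p_i$ will not be guessed ab initio; instead I would leave them as parameters and derive the constraints they must satisfy in order for the amortized analysis to close, and only at the end verify that the system of constraints has a solution. Since the metric is uniform, fixing a matching between ALG's and OPT's servers lets us encode the joint state combinatorially by the subset $S\subseteq\{1,\dots,k\}$ of ALG-servers that are \emph{not} colocated with their matched OPT-server; this is the finite state space on which the analysis lives.

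On this state space I would define a potential function $\Phi$ implicitly as the solution of a linear system. The equations of the system express, for every state and every type of adversarial request (a request at a point occupied neither by OPT nor by any ALG server; at a point occupied by OPT but not by ALG; etc.), the amortized inequality
\[
\mathbf{E}[\Delta\mathrm{ALG}]+\mathbf{E}[\Delta\Phi]\;\le\;\alpha_k\cdot\Delta\mathrm{OPT}.
\]
Because ALG's response is randomized, $\mathbf{E}[\Delta\mathrm{ALG}]$ and the transition of the state are linear in the $p_i$, and $\Phi$ appears linearly on both sides, so this is genuinely a linear system in the unknowns $\{\Phi(S):S\subseteq\{1,\dots,k\}\}$ once the $p_i$ are fixed. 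The key technical contribution, as the abstract flags, is to show that this implicitly defined $\Phi$ has a nonnegative bounded solution. I would prove this by using the structure of the uniform metric to peel off the heaviest server: conditioning on whether the most expensive server $k$ has moved, the evolution on the remaining $k-1$ coordinates behaves like the weighted $(k-1)$-server problem, on which the inductive hypothesis yields competitive ratio $\alpha_{k-1}$. Bounding the contribution of server $k$ separately and combining its cost with the inductively bounded subsystem is what should produce the quadratic recurrence $\alpha_k=\alpha_{k-1}^2+3\alpha_{k-1}+1$: one factor of $\alpha_{k-1}$ accounts for the $(k-1)$-server subgame played between moves of server $k$, another factor arises from amortizing server $k$'s moves against these subgames, and the linear terms $3\alpha_{k-1}+1$ correspond to boundary effects at the transitions.

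The main obstacle is twofold. First, verifying solvability and nonnegativity of the implicit potential: the linear system has one equation per (state, transition) pair and only $2^k$ unknowns, so it is overdetermined and solvability genuinely uses the algebraic structure of the problem. I expect to choose the $p_i$ precisely so that the dependent equations become consistent, which will pin down $p$ uniquely in terms of $\alpha_{k-1}$. Second, closing the induction against an \emph{adaptive online} adversary, where the adversary may condition on ALG's random choices, requires the inequality to hold in expectation over the joint state rather than merely in the marginal, so the potential must depend on OPT's configuration as well; checking every adversarial move — including the worst case where OPT migrates the heaviest server — against the recurrence is where the bulk of the technical calculation will sit. Once that is in place, the competitive ratio $\alpha_k$ follows by a standard telescoping of the amortized inequality over the request sequence.
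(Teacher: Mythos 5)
Your high-level outline matches the paper's framing: the state space is indexed by subsets of $[k]$ recording which of ALG's servers coincide with OPT's matched servers, the potential is defined implicitly as a solution of a linear system parametrized by the probabilities $p_i$, and one must verify that the implicitly defined potential satisfies the remaining amortized inequalities. However, there are two substantive gaps between what you sketch and a proof.

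First, your plan for handling the implicit potential is unsupported. You write that the system is overdetermined and that you will ``choose the $p_i$ precisely so that the dependent equations become consistent,'' pinning down $p$ uniquely. That is not how it works, and it is not how the paper proceeds. The paper picks a clean square subsystem of $2^k$ equations (one constraint per $S \ne [k]$ tightened at the \emph{smallest} index $i \notin S$, plus $\varphi_\emptyset = 0$) and then must prove, for \emph{every} non-increasing $p$, that the solution of this square system satisfies all the \emph{other} inequality constraints; this is a feasibility check, not a consistency-of-an-overdetermined-system problem, and $p$ is optimized afterward. The crux of that feasibility check is establishing monotonicity of ``currents'' $\mathbf{I}^S_{S \cup \{i\}}$ in $i$ and supermodularity of $\varphi_S$, and the paper's key device for doing so without an explicit formula is the Gauss--Seidel trick: show that a desired linear inequality holds at iterate $0$, is preserved by each Gauss--Seidel step, and hence holds at the limit. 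Your proposal flags that ``solvability and nonnegativity of the implicit potential'' is the main obstacle but offers no mechanism to get past it; without something like the Gauss--Seidel induction (or an explicit closed form, which the authors explicitly say is hopeless even for $k=4$), the argument does not close.

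Second, the ``peel off the heaviest server and treat the remaining coordinates as a $(k-1)$-server subgame'' idea, while it gives the right intuition for why a quadratic recurrence should appear, is not a correct reduction. The transition dynamics on all $k$ coordinates are coupled (a request at a point occupied by neither OPT nor ALG can move any server), so you cannot simply invoke the inductive hypothesis on the $(k-1)$-server instance between moves of server $k$. What the paper actually does is prove a bound $\tilde\alpha(\beta, p)$ valid for every non-increasing $p$, define constants $C_S$ recursively by $C_S = 1 + \sum_{j\in S} C_{S\setminus\{j\}\cup[j-1]}$, prove the current bound $\mathbf{I}^S_{S\cup\{i\}}(p) \le C_S$ by induction on the co-lex order of $S$, and then derive $C_{[k]} = (\alpha_{k-1}+2)C_{[k-1]} = (\alpha_{k-1}+2)(\alpha_{k-1}+1) = \alpha_k + 1$ and $\sum_j C_{[k]\setminus\{j\}} = \alpha_k$ by a separate induction. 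Choosing $p_i = C_{[k]\setminus\{i\}}/\beta_i$ then makes $\tilde\alpha(\beta,p) \le \alpha_k$. The recurrence thus emerges from the algebra of the $C_S$, not from a game-theoretic decomposition into a $(k-1)$-server subgame plus boundary terms. Your narrative account of ``one factor of $\alpha_{k-1}$ for the subgame, another for amortizing server $k$, and $3\alpha_{k-1}+1$ for boundary effects'' does not correspond to any step that would actually be carried out.
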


\begin{theorem}\label{thm_main_lb}
There does not exist a memoryless algorithm for the weighted $k$-server problem on uniform metric spaces with competitive ratio less than $\alpha_k$, for any $k$, against an online adaptive adversary.
\end{theorem}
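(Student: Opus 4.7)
The plan is to prove the lower bound by induction on $k$, matching the two-level structure of the recurrence $\alpha_k = \alpha_{k-1}^2 + 3\alpha_{k-1} + 1$. The base case $k=1$ is immediate: $\alpha_1 = 1$ and any algorithm for a single server is trivially $1$-competitive. For the inductive step, I would fix an arbitrary memoryless algorithm $A$ specified by a probability vector $p = (p_1, \ldots, p_k)$, and choose the server weights $\beta_1 < \cdots < \beta_k$ adversarially, taking the ratios $\beta_i / \beta_{i-1}$ as large as needed so that cost contributions from different weight scales essentially decouple in the accounting.

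The adversary's strategy will exploit the inductive hypothesis via the following observation: conditioned on the heaviest server $k$ not moving, $A$ behaves exactly like a memoryless $(k-1)$-server algorithm with renormalized probabilities $p_i/(1-p_k)$, to which the inductive $\alpha_{k-1}$ bound applies. The full adversary operates by running this inner $(k-1)$-server subgame on a restricted set of points while parking its own heavy server at a fixed anchor point, which it uses sparingly to force $A$'s server $k$ to move. Whenever $A$ does move its heavy server, the adversary pays a bounded resynchronization overhead to restart the inner subgame; this overhead is negligible in the amortized analysis because of the large weight gaps. Optimizing over $p_k$ --- balancing the ratio achieved by the inner game against the ratio achieved by repeatedly forcing heavy-server moves --- should yield the recurrence, with the summands $\alpha_{k-1}^2$, $3\alpha_{k-1}$, and $1$ corresponding naturally to the nested amplification of the inductive bound, boundary/synchronization costs around heavy moves, and the baseline per-request cost.

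The main obstacle, I expect, is making the inductive embedding rigorous. The memoryless $k$-server algorithm has access to the position of its heavy server, information unavailable to the inner $(k-1)$-server algorithm, and we must argue that this extra state does not let $A$ escape the inductive bound --- which should follow because the inner adversary is adaptive and can be re-initialized whenever server $k$ moves, but cleanly bounding the expected number of resynchronizations per unit of adversary cost is delicate. Additionally, extracting exactly the polynomial $\alpha_k = \alpha_{k-1}^2 + 3\alpha_{k-1} + 1$ (rather than some weaker function of $\alpha_{k-1}$) requires a precise balancing of the two regimes, presumably carried out by identifying a specific threshold value of $p_k$ at which both the inner-game strategy and the anchor-forcing strategy yield equal ratios, and then verifying that the worst case for the adversary's choice meets the claimed bound.
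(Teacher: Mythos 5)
Your proposal sketches a different route from the paper's, and it leaves the hard part unproved. The paper does not induct on $k$ at all for the lower bound. Instead, it reuses the \emph{same} potentials $\varphi_S$ from the upper-bound analysis in reverse: for any fixed $(\beta,p)$ it constructs an adaptive adversary and shows, via Lemma~\ref{lem_eqn_varphi_tight}, that the expected decrease in potential on an algorithm step is \emph{exactly} the expected algorithm cost, while the increase on an adversary step is exactly $\tilde\alpha(\beta,p)\cdot\beta_t$. This yields a lower bound of $\tilde\alpha(\beta,p)/(1+s\tilde\alpha(\beta,p))$ for \emph{every} probability vector $p$ (Theorem~\ref{thm_lowerbound}), where $s$ measures weight separation. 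It then takes $\beta_i=r^{i-1}$ and proves that as $r\to\infty$, each current $\mathbf{I}^S_{S\cup\{i\}}(p(r))$ converges to the constant $C_S$ (using that supermodularity becomes tight in the limit), from which $\inf_p\tilde\alpha\to\alpha_k$ follows by an AM-type minimization over $p$. The recurrence for $\alpha_k$ is not derived from any game decomposition; it falls out of Definition~\ref{def_C} and Lemma~\ref{lem_alpha}.

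Your approach has a concrete gap that you partly acknowledge but do not close: the ``inner $(k-1)$-server subgame'' is not a clean reduction. When you condition on server $k$ not moving, the remaining algorithm is memoryless with renormalized probabilities, fine --- but the inductive lower bound for $(k-1)$ servers is an asymptotic statement requiring the inner adversary to choose its own weights and run a long sequence. You must charge the adversary's inner-game cost against the algorithm's inner-game cost at ratio $\alpha_{k-1}$, \emph{and} simultaneously account for the heavy-server moves (which reset the inner game) without blowing up the adversary's cost. You assert that ``negligible resynchronization overhead'' plus ``balancing over $p_k$'' yields precisely $\alpha_{k-1}^2+3\alpha_{k-1}+1$, but you give no mechanism producing the $\alpha_{k-1}^2$ term (two levels of $\alpha_{k-1}$ amplification), and the attribution of $3\alpha_{k-1}$ to ``boundary costs'' is speculative. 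Without a potential function or an explicit amortization scheme, there is no reason the balance point lands on this particular polynomial rather than some weaker function of $\alpha_{k-1}$. In short, the proposal identifies a plausible high-level plan but defers exactly the steps where the real work lies, and the paper obtains the result by a structurally different argument (tightness of the upper-bound certificate plus a limiting argument on the currents) that avoids the embedding problem entirely.
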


In order to establish Theorem \ref{thm_main_ub}, we prove a more general result. Given server weights $\beta=(\beta_1,\ldots,\beta_k)$, and a probability distribution $p$ on the servers used by an algorithm, we derive an upper bound $\tilde{\alpha}(\beta,p)$ on the competitive ratio, as a function of $\beta$ and $p$. Given $\beta$, we use this result to identify a probability distribution $p$, such that the competitive ratio is at most $\alpha_k$. As a by-product of this more general result, we also derive that the Harmonic algorithm is $(k\alpha_k)$-competitive, for any $\beta$, against an online adaptive adversary. For $k=2$, we get $\alpha_2=5$, and our result matches that of Chrobak and Sgall \cite{ChrobakS04}.

Towards proving Theorem \ref{thm_main_lb}, we first prove that the upper bound of $\tilde{\alpha}(\beta,p)$ is tight. Specifically, we prove that if the \textit{separation} $\min_i\beta_{i+1}/\beta_i$ between the weights is sufficiently large, then there exists an online adaptive adversary, which forces the algorithm using the probability distribution $p$ to perform almost $\tilde{\alpha}(\beta,p)$ times worse. It is interesting to note that we leverage the machinery developed to prove the upper bound, to prove this lower bound too; we use the same potentials in a different avatar. We then prove that with weights $1,r,r^2,\ldots,r^{k-1}$, for a sufficiently large $r$, $\inf_p\tilde{\alpha}(\beta,p)$ can be forced to be arbitrarily close to $\alpha_k$.

The main difficulty in analyzing algorithms for this problem stems from the inability to describe suitable potential functions explicitly. We formulate a set of linear inequalities that the potentials must satisfy, where the co-efficients involved in the inequalities depend on the probabilities and the weights. This by itself has been done before; see for example \cite{BartalCL98}. However, the rest of the work is very different. We then show that the point, at which a certain carefully chosen subset of the linear inequalities is tight, is feasible. Our work indicates that the potentials given by this point are complicated rational functions of the probabilities and weights, and describing them seems hopeless, even for $k=4$. Our key technical contribution is a framework to work with potential functions defined implicitly, as the solution of a linear system.

Theorem \ref{thm_main_ub} also has the following consequence. Together with the derandomization result by Ben-David et al. \cite{Ben-DavidBKTW94}, it implies the existence of a deterministic algorithm, for the weighted $k$-server problem on uniform spaces, with competitive ratio $\alpha_k^2$. It can be easily proved that $\alpha_k<1.6^{2^k}$ and thus, we have an upper bound of $2.56^{2^k}$. This is significantly better than the earlier bound on the deterministic competitive ratio by Fiat and Ricklin \cite{FiatR94}, which was more than $2^{4^k}$.

\section{Preliminaries and Techniques}\label{sec_preliminaries}

Let $\beta=(\beta_1,\ldots,\beta_k)$ be the weights of the servers in an instance of the weighted $k$-server problem. Consider a memoryless algorithm that, in response to a request on a point not already occupied by a server, moves the $i^{\text{\tiny{th}}}$ server with probability $p_i$. We derive an upper bound on its competitive ratio, as a function of $\beta$ and $p=(p_1,\ldots,p_k)$. Note that whenever a point not occupied by the algorithm's servers is requested, the expected cost incurred by the algorithm is $\sum_{j=1}^kp_j\beta_j$.

\subsection{Potential functions}

In this paper, we design algorithms against an online adaptive adversary \cite{Ben-DavidBKTW94}. An online adaptive adversary observes the behavior of the algorithm on the previous requests, generates the next request, and immediately serves it. The traditional method for analyzing an online algorithm is to associate a \textit{potential} with each \textit{state}, determined by the positions of the adversary's and algorithm's servers, such that
\begin{enumerate}
\item When the adversary moves, the increase in the potential is at most $\alpha$ times the cost incurred by it.
\item When the algorithm moves, the decrease in the potential is at least as much as the cost incurred by the algorithm.
\end{enumerate}
We think of each request being first served by the adversary, and then by the algorithm. A standard telescoping argument implies that the competitive ratio is then bounded from above by $\alpha$.

In our case, we define the states as follows. At any point of time, let $a_i$ (resp. $s_i$) denote the position of the adversary's (resp. algorithm's) $i^{\text{\tiny{th}}}$ server. We identify our state with the set $S=\{i\suchthat a_i=s_i\}\subseteq[k]$. We denote by $\phi_S$ the potential we associate with state $S$. We assume, without loss of generality, that the adversary never requests a point occupied by one of algorithm's servers, and that the adversary moves its servers only to serve requests. Suppose that at some point of time the state is $S$, and the adversary moves its $i^{\text{\tiny{th}}}$ server, incurring a cost $\beta_i$. If $i\notin S$, then the state does not change, while if $i\in S$ the state changes to $S\setminus\{i\}$. In order to prove $\alpha$-competitiveness it is sufficient to have potentials satisfying
\begin{equation}\label{eqn_phi_adv}
\phi_{S\setminus\{i\}}-\phi_S\leq\beta_i\cdot\alpha\text{ for every }S\text{ and }i\in S
\end{equation}

Suppose that the current state is $S$, and it is the algorithm's turn to serve the request. The request must be $a_i$ for some $i\notin S$. If the algorithm moves its $i^{\text{\tiny{th}}}$ server, the new state is $S\cup\{i\}$. This happens with probability $p_i$, and the decrease in potential is $\phi_S-\phi_{S\cup\{i\}}$. Else if the algorithm moves its $j^{\text{\tiny{th}}}$ server for some $j\in S$, the new state is $S\setminus\{j\}$. This happens with probability $p_j$, and the decrease in potential is $\phi_S-\phi_{S\setminus\{j\}}$. Finally, if the algorithm moves its $j^{\text{\tiny{th}}}$ server for some $j\notin S$ and $j\neq i$, there is no change in the state, and hence the potential. We want the expected decrease in potential to be at least the expected cost incurred by the algorithm. Thus, we need
\begin{equation}\label{eqn_phi_alg}
p_i(\phi_S-\phi_{S\cup\{i\}})-\sum_{j\in S}p_j(\phi_{S\setminus\{j\}}-\phi_S)\geq\sum_{j=1}^kp_j\beta_j\text{ for every }S\text{ and }i\notin S
\end{equation}

\subsection{A Linear Program and a choice of an Extreme Point}\label{subsec_lp_extremept}

Among the set of potentials $\phi_S$, for each $S\subseteq[k]$, satisfying (\ref{eqn_phi_alg}), we wish to pick one to minimize $\alpha$, which is bounded from below due to (\ref{eqn_phi_adv}). The conditions (\ref{eqn_phi_alg}) define a polyhedron in $\mathbb{R}^{2^k}$. Note that the right hand side of each constraint in (\ref{eqn_phi_alg}) is constant. We assume that $\phi_{\emptyset}$, the potential of the empty set, is $0$.

To simplify calculations and to facilitate an inductive approach, we introduce, with foresight, a change of variables. Let us replace the variables $(\phi_S)_{S\subseteq[k]}$ by the variables $(\varphi_S)_{S\subseteq[k]}$ such that $\phi_S=-(\sum_{j=1}^kp_j\beta_j)\varphi_S$. With this substitution, we have the following optimization problem.
\begin{center}
Minimize $\alpha$ subject to
\end{center}
For every $S$ and $i\in S$,
\begin{equation}\label{eqn_varphi_adv}
\left(\sum_{j=1}^kp_j\beta_j\right)\frac{\varphi_S-\varphi_{S\setminus\{i\}}}{\beta_i}\leq\alpha
\end{equation}
For every $S$ and $i\notin S$,
\begin{equation}\label{eqn_varphi_alg}
p_i\left(\varphi_{S\cup\{i\}}-\varphi_S\right)-\sum_{j\in S}p_j\left(\varphi_S-\varphi_{S\setminus\{j\}}\right)\geq1
\end{equation}
\begin{equation}\label{eqn_varphi_0}
\varphi_{\emptyset}=0
\end{equation}
Note that the objective value of this problem does not change when all the $p_j$'s are scaled by a positive constant, and the feasible space merely gets scaled by the inverse of that constant. Henceforth, for convenience, we will ignore the fact that $p_1,\ldots,p_k$ sum to $1$. We may think of $p_1,\ldots,p_k$ as the relative frequencies of moving the respective servers, 
and therefore, the algorithm moves the $i^{\text{\tiny{th}}}$ server with probability $p_i/(\sum_{j=1}^kp_j)$. 

Our task is to establish the existence of one feasible point of the linear program given by (\ref{eqn_varphi_adv}), (\ref{eqn_varphi_alg}), (\ref{eqn_varphi_0}) with the required bound on the competitive ratio. Recall that linear programming theory says the optimum must be attained at an extreme point. We guess a subset of $2^k$ linearly independent constraints among (\ref{eqn_varphi_alg}) which will be satisfied with equality. This forms the implicit description of the potentials. Assume without loss of generality, that $p_1\geq\cdots\geq p_k$. Let $\varphi_S=\varphi_S(p)$ for all $S$ be the solution of the following linear system of equations:
\begin{equation}\label{eqn_varphi_tight}
p_i\left(\varphi_{S\cup\{i\}}-\varphi_S\right)-\sum_{j\in S}p_j\left(\varphi_S-\varphi_{S\setminus\{j\}}\right)=1
\end{equation}
for every $S\neq[k]$ and $i$: the smallest integer not in $S$, assuming $\varphi_{\emptyset}=\varphi_{\emptyset}(p)=0$.
Let $\alpha(p)=\max_{S,i\in S}\left(\sum_{j=1}^kp_j\beta_j\right)\frac{\varphi_S(p)-\varphi_{S\setminus\{i\}}(p)}{\beta_i}$. We need to prove that the quantities $\varphi_S(p)$, as defined above, constitute a feasible point, that is, satisfy all the remaining constraints in (\ref{eqn_varphi_alg}). We will then prove an upper bound on the value of the objective function $\alpha(p)$.

\subsection{Checking Feasibility: The Gauss-Seidel Trick}\label{sec_gaussseidel}

The na\"\i ve way to check feasibility is to determine each $\varphi_S(\cdot)$ explicitly, substitute in (\ref{eqn_varphi_alg}), and verify that the constraints are satisfied for every $p$. However, these functions tend to be more and more complicated as $k$ grows, and it is hopeless to find the closed form expressions, even when $k=4$. We therefore resort to the following indirect way.

Suppose we want to prove that the solution $x^*\in\mathbb{R}^n$ of the system $Ax=b$ satisfies $c^{\top}x^*\leq d$. The Gauss-Seidel iterative procedure in numerical computation to compute $x^*$ is as follows. Write $A$ as $L_*+U$, where $L_*$ consists of the diagonal and the lower triangular part of $A$, and $U$ consists of the upper triangular part. Choose an initial point $x^0$, and for $i$ going from $1$ to $\infty$, calculate $x^i=L_*^{-1}(b-Ux^{i-1})$. In other words, in every iteration, the $j^{\text{\tiny{th}}}$ coordinate is computed using the $j^{\text{\tiny{th}}}$ equality in the system $Ax=b$, and the latest values of other coordinates. The coordinates are computed in a fixed order in all iterations. Under certain sufficiency conditions on $A$, which imply that $L^*$ is invertible, the sequence $(x^i)$ converges to $x^*$.

Our technique for proving $c^{\top}x^*\leq d$ is as follows. With a suitable choice of the initial point $x^0$, we prove that $c^{\top}x^0\leq d$, and that for all $i$, $c^{\top}x^{i-1}\leq d$ implies $c^{\top}x^i\leq d$. This proves that the entire sequence $(x^i)$ satisfies the constraint $c^{\top}x\leq d$. Further, since the constraint defines a closed subset of $\mathbb{R}^n$, the limit point $x^*$ also satisfies the constraint. We will call this trick the Gauss-Seidel trick for feasibility checking.

Two sufficient conditions for the Gauss-Seidel iterations to converge are that the system be strictly diagonally dominated, or irreducibly diagonally dominated.\footnote{\url{http://en.wikipedia.org/wiki/Gauss-Seidel_method}} In our case, the system given by (\ref{eqn_varphi_tight}) is diagonally dominated, but it is neither strictly diagonally dominated, nor irreducibly diagonally dominated. Hence, the Gauss-Seidel trick does not apply directly. To deal with this and other minor technical issues, we will define $\varphi_S$ inductively, using certain other functions $f_S$ of the probabilities. For a fixed probability distribution, these functions $f_S$ will be the solutions of a strictly diagonally dominated linear system. We will use the Gauss-Seidel trick to prove certain linear inequalities involving these functions, which will imply the inequalities in (\ref{eqn_varphi_alg}).

\section{Proof of the Upper Bound}\label{sec_upperbound}

\subsection{Defining the Potentials}\label{subsec_def_potentials}

Let $\mathbb{R}(X)=\mathbb{R}(X_1,X_2,\ldots)$ denote the field of rational expressions over the countably infinite set of indeterminates $\{X_1,X_2,\ldots\}$. Recall that any element of this field is a ratio of a multivariate polynomial to another non-zero multivariate polynomial. By definition, a polynomial is a real combination of finitely many monomials. Hence, any rational expression in $\mathbb{R}(X)$ involves only finitely many indeterminates. Given any $\varphi\in\mathbb{R}(X)$, let $n$ be the largest integer such that $X_n$ appears in $\varphi$. Then for $q=(q_1,q_2,\ldots,q_m)$, the evaluation $\varphi(q)\in\mathbb{R}$ for the substitution $X_i=q_i$ is well defined, as long as $m\geq n$ and the denominator of $\varphi$ does not vanish at $q$.

We will now formally define the functions $\varphi_S\in\mathbb{R}(X)$,
one for each finite subset $S\subseteq\mathbb{N}$. 
If $n$ is the largest integer in $S$, then the rational expression $\varphi_S$ will involve the indeterminates $X_1,\ldots,X_n$ only, and its evaluation will be well defined in the open positive orthant of $\mathbb{R}^n$.

For a finite $S\subseteq\mathbb{N}$ and $i\in S$, define the rational function $\mathbf{I}^{S\setminus\{i\}}_S\in\mathbb{R}(X)$ to be $X_i(\varphi_S-\varphi_{S\setminus\{i\}})$.
The function $\varphi_S$ is to be thought of as an electric potential applied at $S$, with the sets $S$ and $S\setminus\{i\}$ connected by a conductance $X_i$. Therefore $\mathbf{I}^{S\setminus\{i\}}_S$ can be viewed as the current flowing from $S$ to $S\setminus\{i\}$. Note that the potentials and currents will satisfy the Kirchhoff's voltage law but not the current law.

We define $\varphi_S$ by induction on the largest integer $n$ in $S$. $\varphi_{\emptyset}$ is defined to be identically zero. Having defined $\varphi_T$ for each $T\subseteq[n-1]$ (and hence, $\mathbf{I}^{[n-1]\setminus\{j\}}_{[n-1]}$ for each $j\in[n-1]$), for $S\subseteq[n]$ such that $n\in S$, we define
\begin{equation}\label{eqn_def_varphi}
\varphi_S=\varphi_{S\setminus\{n\}}+\frac{f_S}{X_n}
\end{equation}
where the rational functions $f_S\in\mathbb{R}(X)$ satisfy the following equations.
\begin{equation}\label{eqn_f2}
f_{[n]}=1+\sum_{j=1}^{n-1}\mathbf{I}^{[n-1]\setminus\{j\}}_{[n-1]}
\end{equation}
and for $S\neq[n]$, with $i<n$, the smallest integer not in $S$,
\begin{equation}\label{eqn_f1}
\left(X_i+\sum_{j\in S}X_j\right)f_S=X_if_{S\cup\{i\}}+\sum_{j\in S\setminus\{n\}}X_jf_{S\setminus\{j\}}
\end{equation}
We claim that the linear system given by the equations (\ref{eqn_f1}) and (\ref{eqn_f2}) in the variables $\{f_S\suchthat S\subseteq[n]\text{, }n\in S\}$ has a unique solution in $\mathbb{R}(X)$. To see this, substitute arbitrary positive values for $X_1,\ldots,X_n$, and observe that the coefficient matrix is strictly diagonally dominant. This ensures that the determinant of the coefficient matrix is not identically zero. Furthermore, this also guarantees that the evaluation $f_S(p)$ is well defined, for $p=(p_1\ldots,p_n)\in(\mathbb{R}_{>0})^n$. Moreover, the Gauss-Seidel procedure converges to the solution when started from any point.

It will be useful to have an expression for the currents in terms of the functions $f_S$, which we derive next. First, note that $\mathbf{I}^{S\setminus\{n\}}_S=f_S$. Further, for $i\in S$, $i<n$ we have
\begin{eqnarray}\label{eqn_current_f}
\mathbf{I}^{S\setminus\{i\}}_S & = & X_i\left(\varphi_S-\varphi_{S\setminus\{i\}}\right)\nonumber\\
 & = & X_i\left[\left(\varphi_{S\setminus\{n\}}+\frac{f_S}{X_n}\right)-\left(\varphi_{S\setminus\{i,n\}}+\frac{f_{S\setminus\{i\}}}{X_n}\right)\right]\nonumber\\
 & = & X_i\left(\varphi_{S\setminus\{n\}}-\varphi_{S\setminus\{i,n\}}\right)+\frac{X_i}{X_n}\left(f_S-f_{S\setminus\{i\}}\right)\nonumber\\
 & = & \mathbf{I}^{S\setminus\{i,n\}}_{S\setminus\{n\}}+\frac{X_i}{X_n}\left(f_S-f_{S\setminus\{i\}}\right)
\end{eqnarray}
We note the following facts, which can be easily proved by induction. For any constant $c\in\mathbb{R}$,
\begin{enumerate}
\item $\varphi_S(cX)=\varphi(cX_1,cX_2,\ldots)=\varphi_S(X)/c$.
\item $\mathbf{I}^{S\setminus\{i\}}_S(cX)=\mathbf{I}^{S\setminus\{i\}}_S(X)$. In particular, $f_S(cX)=f_S(X)$.
\end{enumerate}

While (\ref{eqn_def_varphi}), (\ref{eqn_f2}), (\ref{eqn_f1}) may be taken as an independent definition, we will now show that this definition of $\varphi_S$ coincides with the equation (\ref{eqn_varphi_tight}) in Section \ref{subsec_lp_extremept}.

\begin{lemma}\label{lem_eqn_varphi_tight}
Let $S$ be a finite subset of $\mathbb{N}$ and let $i$ be the smallest integer not in $S$. Then
\[X_i\left(\varphi_{S\cup\{i\}}-\varphi_S\right)=1+\sum_{j\in S}X_j\left(\varphi_S-\varphi_{S\setminus\{j\}}\right)\]
or equivalently
\[\mathbf{I}^S_{S\cup\{i\}}=1+\sum_{j\in S}\mathbf{I}^{S\setminus\{j\}}_S\]
\end{lemma}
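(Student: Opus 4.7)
The plan is to prove the equivalent current form $\mathbf{I}^S_{S\cup\{i\}}(p) = 1 + \sum_{j\in S}\mathbf{I}^{S\setminus\{j\}}_S(p)$ by induction on $k := \max(S\cup\{i\})$, equivalently on the arity of the potentials involved. Since $i$ is the smallest element not in $S$, exactly one of two cases occurs: either $i = k$ (so $S = [k-1]$), or $i < k$ (so $k \in S$). The proof is essentially a direct unwinding of the definitions, exploiting the fact that (\ref{eqn_f1}) was engineered precisely to make the target identity hold.

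For the base case $S = [k-1]$, the relation $\mathbf{I}^{[k-1]}_{[k]}(p) = f_{[k]}(p)$ follows immediately from (\ref{eqn_def_varphi}) since $\varphi_{[k]}(p) - \varphi_{[k-1]}(p) = f_{[k]}(p)/p_k$, and then (\ref{eqn_f2}) reads exactly as the required identity.

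For the inductive step $i < k$, write $S' = S \setminus \{k\}$, so that both $S'$ and $S'\cup\{i\}$ have arity at most $k-1$ and $i$ is still the smallest element missing from $S'$. Expanding $\varphi_{S\cup\{i\}}$ and $\varphi_S$ via (\ref{eqn_def_varphi}) gives
\[
\mathbf{I}^S_{S\cup\{i\}}(p) \;=\; \mathbf{I}^{S'}_{S'\cup\{i\}}(p) \;+\; \frac{p_i}{p_k}\bigl(f_{S\cup\{i\}}(p) - f_S(p)\bigr).
\]
On the right-hand side of the target identity, I would isolate the $j = k$ summand, which equals $f_S(p)$ by the observation after (\ref{eqn_f2}), and expand each $j \in S'$ summand using (\ref{eqn_current_f}). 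After applying the inductive hypothesis to $\mathbf{I}^{S'}_{S'\cup\{i\}}(p)$ and cancelling the shared constant $1$ and sum $\sum_{j\in S'}\mathbf{I}^{S'\setminus\{j\}}_{S'}(p)$, clearing the factor $1/p_k$ reduces the claim to
\[
p_i f_{S\cup\{i\}}(p) + \sum_{j\in S\setminus\{k\}} p_j f_{S\setminus\{j\}}(p) \;=\; \Bigl(p_i + \sum_{j\in S} p_j\Bigr) f_S(p),
\]
which is exactly the defining relation (\ref{eqn_f1}) for $f_S$. Since $i \notin S$ we have $S \neq [k]$, so (\ref{eqn_f1}) does apply here.

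I do not anticipate a serious obstacle; the algebra is dictated entirely by the definitions. The points that require care are the bookkeeping of the smallest-missing index (which stays equal to $i$ after removing $k$ precisely because $i < k$), the verification that $S \neq [k]$ in the inductive step so that (\ref{eqn_f1}) is the correct defining equation, and treating the $j = k$ summand separately via the direct identification $\mathbf{I}^{S\setminus\{k\}}_S = f_S$ rather than through (\ref{eqn_current_f}), which is only formulated for indices strictly below $k$.
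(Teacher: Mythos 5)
Your proof is correct and follows essentially the same route as the paper's: both induct on the largest element, handle the initial-segment case directly from (\ref{eqn_f2}), and in the inductive step apply (\ref{eqn_current_f}) to reduce to $S\setminus\{k\}$, with the residual algebraic identity on the $f$-values being precisely (\ref{eqn_f1}). Your care in isolating the $j=k$ summand (where $\mathbf{I}^{S\setminus\{k\}}_S=f_S$ directly, since (\ref{eqn_current_f}) is only stated for $i<k$) and in noting that $i<k$ remains the smallest element missing from $S\setminus\{k\}$ matches the paper's reasoning exactly.
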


\begin{proof}
The claim is true for $S=\emptyset$, since $\mathbf{I}^{\emptyset}_{\{1\}}=f_{[1]}=1$, by equation (\ref{eqn_f2}). We prove by induction on the largest integer $n$ in $S$. First, consider the case when $S=[n]$. Then $i=n+1$, and we have $\mathbf{I}^S_{S\cup\{i\}}=f_{[n+1]}=1+\sum_{j=1}^{n}\mathbf{I}^{[n]\setminus\{j\}}_{[n]}$, where the second equality is given by equation (\ref{eqn_f2}). In particular, the claim holds for $n=1$.

Now for the inductive step, assume $n>1$, and $S\subsetneq[n]$. Hence, $i<n$. We have from equation (\ref{eqn_current_f})
\begin{equation}\label{eqn_subs}
\mathbf{I}^S_{S\cup\{i\}}=\mathbf{I}^{S\setminus\{n\}}_{S\cup\{i\}\setminus\{n\}}+\frac{X_i}{X_n}\left(f_{S\cup\{i\}}-f_S\right)
\end{equation}
But $i$ is also the smallest integer not in $S\setminus\{n\}$, and by the induction hypothesis, we have
\[\mathbf{I}^{S\setminus\{n\}}_{S\cup\{i\}\setminus\{n\}}=1+\sum_{j\in S\setminus\{n\}}\mathbf{I}^{S\setminus\{j,n\}}_{S\setminus\{n\}}\]
Further, rearranging equation (\ref{eqn_f1}), we get
\[X_i\left(f_{S\cup\{i\}}-f_S\right)=X_nf_S+\sum_{j\in S\setminus\{n\}}X_j\left(f_S-f_{S\setminus\{j\}}\right)\]
Substituting in equation (\ref{eqn_subs}), and again using equation (\ref{eqn_current_f}), we get
\[\mathbf{I}^S_{S\cup\{i\}}=1+f_S+\sum_{j\in S\setminus\{n\}}\left[\mathbf{I}^{S\setminus\{j,n\}}_{S\setminus\{n\}}+\frac{X_j}{X_n}(f_S-f_{S\setminus\{j\}})\right]=1+\mathbf{I}^{S\setminus\{n\}}_S+\sum_{j\in S\setminus\{n\}}\mathbf{I}^{S\setminus\{j\}}_S=1+\sum_{j\in S}\mathbf{I}^{S\setminus\{j\}}_S\]
\end{proof}

\begin{remark}
The definitions of the rational functions $\varphi_S$, $\mathbf{I}^S_{S\cup\{i\}}$, $f_S$ are all independent of $k$, the number of servers under consideration. However, in the analysis of the weighted $k$-server problem, the potential function is to be defined for sets $S\subseteq[k]$ only. Hence, we use the functions $\varphi_S$, $\mathbf{I}^S_{S\cup\{i\}}$, $f_S$ for $S\subseteq[k]$ and $i\leq k$ only. These involve the indeterminates $X_1,\ldots,X_k$ only. For a randomized memoryless algorithm, given by a probability distribution $p=(p_1,\ldots,p_k)$ on the servers, we evaluate these functions by the substitution $X_i=p_i$, and use the evaluations in our analysis.
\end{remark}

\subsection{Proving Feasibility}

Towards proving Theorem \ref{thm_main_ub}, our first goal is to prove that the potentials satisfy the constraints given by (\ref{eqn_varphi_alg}). Towards this, we first prove suitable inequalities involving the evaluations $f_S(p)$, and then use induction and (\ref{eqn_current_f}). The inequalities that we need are given by the following lemma, which essentially says that the quantity $p_i\left(f_{S\cup\{i\}}(p)-f_S(p)\right)$ is monotone with respect to $i$, for a fixed $S$ and $p$.

\begin{lemma}\label{lemma_f}
Let $S\subseteq[n]$ be a set containing $n$. Suppose $p=(p_1,\ldots,p_n)$ with $p_1\geq\cdots\geq p_n>0$. Then for any $i,i'\notin S$, $i<i'<n$, we have
\[p_i\left(f_{S\cup\{i\}}(p)-f_S(p)\right)\leq p_{i'}\left(f_{S\cup\{i'\}}(p)-f_S(p)\right)\]
\end{lemma}

The proof of this lemma is the technical heart of the upper bound. It is in this proof that we use the Gauss-Seidel trick. We prove that the claim is true after every iteration of the Gauss-Seidel procedure, when started from an appropriately chosen point. This lemma enables us to prove the following current monotonicity property.

\begin{lemma}[Monotonicity of currents]\label{lem_currents}
Let $S$ be a finite subset of $\mathbb{N}$, $i,j\notin S$, $i<j$ and $S\cup\{i,j\}\subseteq[n]$. Suppose $p=(p_1,\ldots,p_n)$ with $p_1\geq\cdots\geq p_n>0$. Then we have
\[\mathbf{I}^S_{S\cup\{i\}}(p)\leq\mathbf{I}^S_{S\cup\{j\}}(p)\]
\end{lemma}

We defer the proofs of the above two claims to the Appendix. The following feasibility lemma, which states that the constraints (\ref{eqn_varphi_alg}) are satisfied, is immediate from Lemmas \ref{lem_eqn_varphi_tight} and \ref{lem_currents}.

\begin{lemma}[Feasibility]\label{lem_eqn_varphi}
Let $S$ be a finite subset of $\mathbb{N}$, $i\notin S$, and $S\cup\{i\}\subseteq[n]$. Suppose $p=(p_1,\ldots,p_n)$ with $p_1\geq\cdots\geq p_n>0$. Then we have
\[p_i\left(\varphi_{S\cup\{i\}}(p)-\varphi_S(p)\right)\geq1+\sum_{j\in S}p_j\left(\varphi_S(p)-\varphi_{S\setminus\{j\}}(p)\right)\]
or equivalently
\[\mathbf{I}^S_{S\cup\{i\}}(p)\geq1+\sum_{j\in S}\mathbf{I}^{S\setminus\{j\}}_S(p)\]
with equality if $i$ is the smallest integer not in $S$.
\end{lemma}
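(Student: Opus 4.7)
The plan is to obtain the feasibility lemma as an immediate consequence of Lemma \ref{lem_eqn_varphi_tight} together with the current-monotonicity Lemma \ref{lem_currents}. The former gives an exact identity when $i$ is the smallest index outside $S$; the latter upgrades that identity to an inequality for any other $i\notin S$, using that $p$ is non-increasing.

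Concretely, I would let $i_0$ denote the smallest element of $\mathbb{N}\setminus S$. Applying Lemma \ref{lem_eqn_varphi_tight} with this choice yields
\[\mathbf{I}^S_{S\cup\{i_0\}}(p)=1+\sum_{j\in S}\mathbf{I}^{S\setminus\{j\}}_S(p),\]
which directly settles the equality clause of the statement (the case $i=i_0$). For an arbitrary $i\in\mathbb{N}\setminus S$, the minimality of $i_0$ forces $i_0\leq i$, so Lemma \ref{lem_currents}, applied to the two non-members $i_0$ and $i$ of $S$, gives
\[\mathbf{I}^S_{S\cup\{i_0\}}(p)\leq\mathbf{I}^S_{S\cup\{i\}}(p).\]
Chaining the two displays produces the desired inequality. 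The hypothesis that $p$ is non-increasing enters only through Lemma \ref{lem_currents} (Lemma \ref{lem_eqn_varphi_tight} is unconditional), and is precisely the hypothesis of the present lemma. The equivalence between the two forms in the statement is just the definition $\mathbf{I}^{T\setminus\{\ell\}}_T(p)=p_\ell(\varphi_T(p)-\varphi_{T\setminus\{\ell\}}(p))$.

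Accordingly, there is no real obstacle at this step: the substantive content of the feasibility proof sits upstream in Lemmas \ref{lemma_f} and \ref{lem_currents}, where the Gauss-Seidel trick must be invoked to work around the lack of closed-form expressions for the $\varphi_S$. Once those two monotonicity results are in hand, the feasibility lemma---and hence the fact that $(\varphi_S(p))_S$ satisfies every inequality in (\ref{eqn_varphi_alg})---follows in the few lines sketched above.
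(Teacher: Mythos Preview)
Your proposal is correct and matches the paper's approach exactly: the paper likewise states that this lemma is immediate from Lemmas~\ref{lem_eqn_varphi_tight} and~\ref{lem_currents}, with the equality case coming from the former and the inequality for general $i\notin S$ obtained by comparing $\mathbf{I}^S_{S\cup\{i_0\}}(p)$ with $\mathbf{I}^S_{S\cup\{i\}}(p)$ via the latter.
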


\subsection{Bounding the Objective Function}

Recall that in Section \ref{subsec_lp_extremept}, we mentioned that the randomized memoryless algorithm, which uses the probability distribution $p=(p_1,\ldots,p_k)$ with $p_1\geq\cdots\geq p_k$, for server weights $\beta_1,\ldots,\beta_k$, has a competitive ratio bounded by 
\[\left(\sum_{j=1}^kp_j\beta_j\right)\max_{S,i\in S}\frac{\varphi_S(p)-\varphi_{S\setminus\{i\}}(p)}{\beta_i}=\left(\sum_{j=1}^kp_j\beta_j\right)\max_{S,i\in S}\frac{\mathbf{I}^{S\setminus\{i\}}_S(p)}{p_i\beta_i}\]
Given $\beta_1,\ldots,\beta_k$, we would like to choose a probability distribution $p$ that minimizes this. How this can be done is unclear due to the presence of the rational function $\mathbf{I}^{S\setminus\{i\}}_S$. However, we will show that each current $\mathbf{I}^{S\setminus\{i\}}_S(p)$ is bounded from above by constants (which depend on $S$ and $i$ but not on $p$). Towards proving this, the key property we need is that for any $p$, $S\mapsto\varphi_S(p)$ is a supermodular set function.

\begin{lemma}[Supermodularity]\label{lem_supermodularity}
Let $S$ be a finite subset of $\mathbb{N}$, $i,j\notin S$, and $S\cup\{i,j\}\subseteq[n]$. Suppose $p=(p_1,\ldots,p_n)$ with $p_1\geq\cdots\geq p_n>0$. Then we have
\[\varphi_{S\cup\{i\}}(p)+\varphi_{S\cup\{j\}}(p)\leq\varphi_{S\cup\{i,j\}}(p)+\varphi_S(p)\]
Thus for any fixed $p$, the function mapping a set $S$ to $\varphi_S(p)$ is supermodular, and we have
\[\mathbf{I}^{S'}_{S'\cup\{i\}}(p)=p_i\left(\varphi_{S'\cup\{i\}}(p)-\varphi_{S'}(p)\right)\leq p_i\left(\varphi_{S\cup\{i\}}(p)-\varphi_S(p)\right)=\mathbf{I}^S_{S\cup\{i\}}(p)\]
whenever $S'\subseteq S\subseteq[n]$ and $i\in[n]\setminus S$.
\end{lemma}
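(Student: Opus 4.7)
The second (``Thus'') clause of the lemma follows from the first by telescoping: I would add the elements of $S\setminus S'$ to $S'$ one at a time, where each single step is exactly a rearrangement of the four-set inequality after dividing by $p_i$. So the real content is the supermodularity inequality itself, which I would prove by induction on $k:=\max(S\cup\{i,j\})$, with $k\leq 1$ being vacuous.

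For the inductive step, two cases arise according to where $k$ sits. If $k\in S$, write $S=T\cup\{k\}$ with $T\subseteq[k-1]\setminus\{i,j\}$; using the defining identity $\varphi_{R\cup\{k\}}(p)=\varphi_R(p)+f_{R\cup\{k\}}(p)/p_k$ on each of the four potentials decomposes the supermodularity defect of $\varphi$ at $(S,i,j)$ as the supermodularity defect of $\varphi$ at $(T,i,j)$---which is $\leq 0$ by the inductive hypothesis, since $\max(T\cup\{i,j\})<k$---plus $(1/p_k)$ times the four-set defect $f_{T\cup\{i,k\}}+f_{T\cup\{j,k\}}-f_{T\cup\{i,j,k\}}-f_{T\cup\{k\}}$ of $f$. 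If instead $k\in\{i,j\}$, say $k=i$, the same expansion at $R=S$ and $R=S\cup\{j\}$ collapses the $\varphi$-defect to $(f_{S\cup\{k\}}(p)-f_{S\cup\{j,k\}}(p))/p_k$. Thus the lemma reduces to two properties of the auxiliary functions $f$: for every $T$ with $k\in T$ and every two distinct $i,j<k$ not in $T$, (a) $f_{T\cup\{j\}}(p)\geq f_T(p)$, and (b) $f_{T\cup\{i\}}(p)+f_{T\cup\{j\}}(p)\leq f_{T\cup\{i,j\}}(p)+f_T(p)$.

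I would establish (a) and (b) by the Gauss-Seidel trick, in the spirit of Lemma~\ref{lemma_f}. Since the linear system (\ref{eqn_f1}),(\ref{eqn_f2}) is strictly diagonally dominant, its Gauss-Seidel iterates converge to $(f_S(p))_S$ from any starting point, so it suffices to exhibit a starting point satisfying (a) and (b) and show that every single-coordinate update preserves both invariants. A natural initialization is the zero vector on all $S\neq[k]$, together with the value forced on $f_{[k]}$ by (\ref{eqn_f2}), which is $\geq 1$ because the currents $\mathbf{I}^{[k-1]\setminus\{j\}}_{[k-1]}(p)$ are non-negative by Lemma~\ref{lem_currents} applied one dimension down. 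Each update of $f_S$ given by (\ref{eqn_f1}) expresses $f_S$ as a convex combination of the ``upper'' neighbor $f_{S\cup\{i\}}$ (with $i$ the smallest index missing from $S$) and the ``lower'' neighbors $f_{S\setminus\{j\}}$, so verifying (a) and (b) after the update reduces to elementary convex-combination arithmetic on adjacent tuples. The main obstacle is (b): one must choose a sweep order under which, when $f_S$ is updated, the three other $f$-values appearing in the four-tuple defect at $S$ already obey the invariant, and then check that the averaging preserves the sign of the defect. As in the proof of Lemma~\ref{lem_currents}, Lemma~\ref{lemma_f} enters here to supply the correctly signed difference that closes the algebra.
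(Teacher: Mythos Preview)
Your reduction is correct: unwinding $\varphi_{R\cup\{k\}}=\varphi_R+f_{R\cup\{k\}}/p_k$ does reduce supermodularity of $\varphi$ to (a) monotonicity of $f$ and (b) supermodularity of $f$. Property (a) is indeed available---it is the $t\to\infty$ limit of part~1 of Claim~\ref{claim_f_t}. The gap is (b). You do not prove it; you only say that a suitably chosen Gauss--Seidel sweep ``reduces to elementary convex-combination arithmetic'' and that Lemma~\ref{lemma_f} ``closes the algebra.'' But the update rule~(\ref{eqn_f1}) for $f_S$ uses the \emph{smallest} missing index of $S$, and this index generally differs among the four sets $T,\,T\cup\{i\},\,T\cup\{j\},\,T\cup\{i,j\}$ in a supermodularity tuple (for instance, if $i$ is the smallest missing index of $T$, then after updating $T$ and $T\cup\{j\}$ the ``upper neighbors'' are $T\cup\{i\}$ and $T\cup\{i,j\}$, but updating $T\cup\{i\}$ pulls in $T\cup\{i,i'\}$ for a \emph{different} $i'$). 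So the four updates are not parallel convex combinations of the same shape, and the sign of the four-set defect does not propagate by simple averaging. Lemma~\ref{lemma_f} gives a weighted inequality $p_i(f_{S\cup\{i\}}-f_S)\le p_{i'}(f_{S\cup\{i'\}}-f_S)$, which is not the same as, and does not obviously imply, the unweighted supermodularity $f_{S\cup\{i\}}+f_{S\cup\{i'\}}\le f_{S\cup\{i,i'\}}+f_S$. As written, step~(b) is an assertion, not an argument.

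The paper sidesteps (b) entirely. It proves supermodularity of $\varphi$ in three lines from facts already in hand: for $i<j$ (so $p_i\ge p_j$), current monotonicity (Lemma~\ref{lem_currents}) gives $\mathbf{I}^S_{S\cup\{i\}}\le \mathbf{I}^S_{S\cup\{j\}}$; feasibility (Lemma~\ref{lem_eqn_varphi}) applied at $S\cup\{j\}$ together with non-negativity of currents gives $\mathbf{I}^S_{S\cup\{j\}}\le \mathbf{I}^{S\cup\{j\}}_{S\cup\{m,j\}}$ for $m$ the least index missing from $S\cup\{j\}$; and current monotonicity again gives $\mathbf{I}^{S\cup\{j\}}_{S\cup\{m,j\}}\le \mathbf{I}^{S\cup\{j\}}_{S\cup\{i,j\}}$. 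Chaining and dividing by $p_i$ yields $\varphi_{S\cup\{i\}}-\varphi_S\le \varphi_{S\cup\{i,j\}}-\varphi_{S\cup\{j\}}$. No induction on $k$, no new property of $f$, and no second Gauss--Seidel argument is needed.
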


We defer the proof to the Appendix. As mentioned earlier, this lemma enables us to prove that each current is bounded from above by a constant independent of $p$. We will define one such constant for each finite subset of $\mathbb{N}$, by induction on the following enumeration of the finite subsets of $\mathbb{N}$, called the \textit{colex order}.

\begin{definition}\label{def_colex}
Let $S$ and $T$ be finite subsets of $\mathbb{N}$. We say that $S$ precedes $T$ in the \textit{colex order} if there exists $i\in T\setminus S$ such that $S$ and $T$ agree on membership of integers greater than $i$.
\end{definition}

For example, the first few sets in the colex order are $\emptyset$, $\{1\}$, $\{2\}$, $\{1,2\}$, $\{3\}$, $\{1,3\}$, $\{2,3\}$, $\{1,2,3\}$, etc. Given a set $S$, the next set is obtained by the including $i$, the smallest number not in $S$, and removing all the numbers $1,\ldots,i-1$ from $S$. Similarly, the set just before $S$ is obtained by removing from $S$ the smallest number $j$ in it, and putting in all the smaller numbers $1,\ldots,j-1$. We will refer to the colex order in the forthcoming inductive definitions and proofs.

\begin{definition}\label{def_C}
For each finite $S\subseteq\mathbb{N}$ define $C_S$ using the following recurrence. $C_{\emptyset}=1$ and if $S\neq\emptyset$, then $C_S=1+\sum_{j\in S}C_{S\setminus\{j\}\cup[j-1]}$. For each $n\in\mathbb{N}$ define $\alpha_n$ as $\alpha_n=\alpha_{n-1}^2+3\alpha_{n-1}+1$, where $\alpha_0=0$.
\end{definition}

Note that the above definition is valid because $S\setminus\{j\}\cup[j-1]$ precedes $S$ in the colex order, for any $j\in S$. The bounds on the currents are given by the following lemma.

\begin{lemma}[Boundedness of currents]\label{lem_currents_abs_bound}
For every finite set $S\subseteq[n]$ and for all $p=(p_1,\ldots,p_n)$ with $p_1\geq\cdots\geq p_n>0$, $\mathbf{I}^S_{S\cup\{i\}}(p)\leq C_S$, where $i$ is the smallest integer not in $S$.
\end{lemma}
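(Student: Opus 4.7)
The proof I have in mind is an induction on $S$ with respect to the co-lex order introduced just before the statement, exactly as the constants $C_S$ themselves are defined.

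For the base case $S=\emptyset$, the smallest element not in $S$ is $1$. Directly from the definitions one has $f_{\{1\}}(p)=1$ (the sum in equation (\ref{eqn_f2}) is empty when $k=1$), so $\mathbf{I}^{\emptyset}_{\{1\}}(p)=p_1\cdot f_{\{1\}}(p)/p_1=1=C_{\emptyset}$. For the inductive step, the plan is to rewrite the current using Lemma \ref{lem_eqn_varphi_tight}, which gives
\[
\mathbf{I}^S_{S\cup\{i\}}(p)=1+\sum_{j\in S}\mathbf{I}^{S\setminus\{j\}}_S(p),
\]
so the task reduces to bounding each summand $\mathbf{I}^{S\setminus\{j\}}_S(p)$ on the right hand side individually.

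The key trick is that each such summand is, a priori, a current whose removed element $j$ is \emph{not} the smallest missing element from $S\setminus\{j\}$, so the induction hypothesis does not apply directly. To remedy this I will invoke supermodularity (Lemma \ref{lem_supermodularity}). The enlarged set I plan to use is
\[
T_j=(S\setminus\{j\})\cup[j-1],
\]
which by construction contains $S\setminus\{j\}$, omits $j$, and has $j$ as its smallest missing element. Supermodularity then yields
\[
\mathbf{I}^{S\setminus\{j\}}_S(p)=\mathbf{I}^{S\setminus\{j\}}_{(S\setminus\{j\})\cup\{j\}}(p)\leq \mathbf{I}^{T_j}_{T_j\cup\{j\}}(p),
\]
and the right hand side is now precisely a current of the form treated by the lemma, with smallest missing element $j$.

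It remains to apply the induction hypothesis to $T_j$, and for this I must verify that $T_j$ strictly precedes $S$ in the co-lex order. This is immediate: the symmetric difference $S\triangle T_j$ is contained in $\{1,\ldots,j\}$, and $j\in S\setminus T_j$, so $S$ and $T_j$ agree on membership of all integers greater than $j$ while $j$ witnesses $T_j\prec S$. The induction hypothesis then gives $\mathbf{I}^{T_j}_{T_j\cup\{j\}}(p)\leq C_{T_j}=C_{S\setminus\{j\}\cup[j-1]}$. Summing over $j\in S$ and adding $1$ recovers exactly the recursive definition of $C_S$, completing the induction. The only delicate point is the choice of $T_j$ and the verification that this choice is both legitimate for supermodularity and earlier in the co-lex order; the rest is bookkeeping against the recursive definition of $C_S$.
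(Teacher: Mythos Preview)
Your proof is correct and follows essentially the same route as the paper: induction on the co-lex order, the identity from Lemma \ref{lem_eqn_varphi_tight}, supermodularity applied with the enlarged set $T_j=(S\setminus\{j\})\cup[j-1]$, and then the recursive definition of $C_S$. The paper's proof is line-for-line the same argument, including the same choice of $T_j$ and the same verification that $T_j$ precedes $S$.
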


\begin{proof}
We induct on the position of $S$ in the colex order. For the base case, when $S=\emptyset$ and $i=1$, we have $\mathbf{I}^{\emptyset}_{\{1\}}(p)=f_{\{1\}}(p)=1=C_{\emptyset}$ for all $p$, by equation (\ref{eqn_f2}). For the inductive case, assume that the claim holds for all finite subsets of $\mathbb{N}$ which precede a set $S$ in the colex order. Let $i$ be the smallest integer not in $S$. Then by Lemma \ref{lem_eqn_varphi_tight}, we have
\[\mathbf{I}^S_{S\cup\{i\}}(p)=1+\sum_{j\in S}\mathbf{I}^{S\setminus\{j\}}_S(p)\leq1+\sum_{j\in S}\mathbf{I}^{S\setminus\{j\}\cup[j-1]}_{S\cup[j-1]}(p)\leq1+\sum_{j\in S}C_{S\setminus\{j\}\cup[j-1]}=C_S\]
where the first inequality is due to supermodularity (Lemma \ref{lem_supermodularity}), and the second is by the induction hypothesis, since the smallest integer not in $S\setminus\{j\}\cup[j-1]$ is $j$.
\end{proof}

Our final ingredients towards the proof of Theorem \ref{thm_main_ub} are the following two lemmas relating the quantities from Definition \ref{def_C}.

\begin{lemma}\label{lem_order}
For finite subsets $S$, $T$ of $\mathbb{N}$, if $S$ precedes $T$ in the colex order, then $C_S<C_T$. In particular for any $n$, $C_{[n]\setminus\{n\}}<C_{[n]\setminus\{n-1\}}<\cdots<C_{[n]\setminus\{2\}}<C_{[n]\setminus\{1\}}$. 
\end{lemma}

\begin{proof}
It is sufficient to prove $C_S<C_T$ when $S$ is the set immediately preceding $T$ in the colex order, that is, $S=T\setminus\{i\}\cup[i-1]$, where $i$ is the smallest integer in $T$. But by Definition \ref{def_C}, we immediately have $C_S=C_{T\setminus\{i\}\cup[i-1]}<C_T$. Further for $i<j\leq n$, since $[n]\setminus\{j\}$ precedes $[n]\setminus\{i\}$ in the colex order, we have $C_{[n]\setminus\{j\}}<C_{[n]\setminus\{i\}}$.
\end{proof}

\begin{lemma}\label{lem_alpha}
For every $n\in\mathbb{N}$, $\alpha_n=\sum_{j=1}^nC_{[n]\setminus\{j\}}=C_{[n]}-1$.
\end{lemma}

\begin{proof}
We prove the claim by induction on $n$, noting that the claim holds for $n=0$. Let $n\geq1$ and assume $C_{[n-1]}=\alpha_{n-1}+1$. For any $S\subseteq[n]$ containing $n$, we first prove $C_S=(\alpha_{n-1}+2)C_{S\setminus\{n\}}$, by induction on the position of $S$ in the colex order.

In the colex order, the first subset of $[n]$ containing $n$ is $\{n\}$. For $S=\{n\}$, we have from Definition \ref{def_C} and by induction on $n$,
\[C_{\{n\}}=1+C_{[n-1]}=2+\alpha_{n-1}=(\alpha_{n-1}+2)C_{\emptyset}\]
For an arbitrary $S\subseteq[n]$ containing $n$, we have from Definition \ref{def_C} and by induction on $n$,
\[C_S=1+\sum_{j\in S}C_{S\setminus\{j\}\cup[j-1]}=1+C_{[n-1]}+\sum_{j\in S\setminus\{n\}}C_{S\setminus\{j\}\cup[j-1]}=\alpha_{n-1}+2+\sum_{j\in S\setminus\{n\}}C_{S\setminus\{j\}\cup[j-1]}\]
Since $S\setminus\{j\}\cup[j-1]$ precedes $S$ in the colex order, $C_{S\setminus\{j\}\cup[j-1]}=(\alpha_{n-1}+2)C_{S\setminus\{j,n\}\cup[j-1]}$. Thus,
\begin{eqnarray*}
C_S & = & \alpha_{n-1}+2+(\alpha_{n-1}+2)\sum_{j\in S\setminus\{n\}}C_{S\setminus\{j,n\}\cup[j-1]}\\
 & = & (\alpha_{n-1}+2)\left(1+\sum_{j\in S\setminus\{n\}}C_{S\setminus\{j,n\}\cup[j-1]}\right)=(\alpha_{n-1}+2)C_{S\setminus\{n\}}
\end{eqnarray*}

This proves $C_S=(\alpha_{n-1}+2)C_{S\setminus\{n\}}$, for all $S\subseteq[n]$ containing $n$. In particular, we have $C_{[n]}=(\alpha_{n-1}+2)C_{[n-1]}$. Again, by induction on $n$ and from Definition \ref{def_C}, we have
\[C_{[n]}=(\alpha_{n-1}+2)(\alpha_{n-1}+1)=\alpha_{n-1}^2+3\alpha_{n-1}+2=\alpha_n+1\]
\end{proof}

\subsection{Proof of Theorem \ref{thm_main_ub}}

We now show how the above lemmas imply Theorem \ref{thm_main_ub}. First, we prove an upper bound on the competitive ratio achieved by the probability distribution $p=(p_1,\ldots,p_k)$, when the weights are $\beta_1,\ldots,\beta_k$. 

\begin{theorem}\label{thm_competitive}
Consider an instance of the weighted $k$-server problem with weights $\beta_1,\ldots,\beta_k$, and a randomized memoryless algorithm which moves the $i^{\text{\tiny{th}}}$ server with a probability $p_i$, where $p_1\geq\cdots\geq p_k$. Then the competitive ratio of this algorithm against an adaptive online adversary is at most $\tilde{\alpha}(\beta,p)$, where
\[\tilde{\alpha}(\beta,p)=\left(\sum_{j=1}^kp_j\beta_j\right)\max_{i\in[k]}\frac{\mathbf{I}^{[k]\setminus\{i\}}_{[k]}(p)}{p_i\beta_i}\]
\end{theorem}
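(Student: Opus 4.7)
The plan is to instantiate the standard potential-function argument from Section~\ref{sec_preliminaries} using the specific potentials $\phi_S(p) = -\bigl(\sum_{j=1}^{k} p_j\beta_j\bigr)\varphi_S(p)$, where $\varphi_S$ is the function constructed in Section~\ref{subsec_def_potentials}. I would verify that these potentials satisfy both the adversary condition (\ref{eqn_phi_adv}) and the algorithm condition (\ref{eqn_phi_alg}) with $\alpha = \tilde\alpha(\beta,p)$, and then invoke the standard telescoping argument. Since the state set is finite, the $\phi_S$ take only finitely many values, so the additive constant arising from telescoping is uniformly bounded and does not affect the competitive ratio.

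For the algorithm's move I would simply unfold the substitution: condition (\ref{eqn_phi_alg}), after dividing through by $\sum_j p_j\beta_j>0$, becomes exactly the inequality $\mathbf{I}^S_{S\cup\{i\}}(p) \geq 1 + \sum_{j\in S}\mathbf{I}^{S\setminus\{j\}}_S(p)$ that Lemma~\ref{lem_eqn_varphi} already supplies, since $p_1\geq\cdots\geq p_k$ by the theorem's hypothesis. Hence the expected drop in $\phi$ whenever the algorithm moves is at least its expected cost $\sum_j p_j\beta_j$, as required.

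The substantive step is the adversary's move: condition (\ref{eqn_phi_adv}) translates, after the same substitution, into $\bigl(\sum_j p_j\beta_j\bigr)\mathbf{I}^{S\setminus\{i\}}_S(p)/(p_i\beta_i) \leq \alpha$ for every $S$ and $i\in S$. The smallest feasible $\alpha$ is therefore $\bigl(\sum_j p_j\beta_j\bigr)\max_{S,\,i\in S}\mathbf{I}^{S\setminus\{i\}}_S(p)/(p_i\beta_i)$, and I must show this equals $\tilde\alpha(\beta,p)$, i.e.\ that the maximum is attained at $S=[k]$. This is where supermodularity (Lemma~\ref{lem_supermodularity}) enters. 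Setting $T = S\setminus\{i\}$, so that $i\notin T$ and $T\subseteq [k]\setminus\{i\}$, the second inequality of Lemma~\ref{lem_supermodularity} gives $\mathbf{I}^{S\setminus\{i\}}_S(p) = \mathbf{I}^T_{T\cup\{i\}}(p) \leq \mathbf{I}^{[k]\setminus\{i\}}_{[k]}(p)$. Dividing by $p_i\beta_i$ and taking the supremum over $S\ni i$ collapses the inner maximum, leaving only a maximum over $i\in[k]$, which is precisely $\tilde\alpha(\beta,p)$.

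I do not anticipate any genuinely hard step beyond the two lemma invocations above: the monotonicity hypothesis $p_1\geq\cdots\geq p_k$ is exactly what Lemmas~\ref{lem_eqn_varphi} and~\ref{lem_supermodularity} require, and the remaining work is bookkeeping to package the feasibility and supermodularity already in hand into the specific closed form $\tilde\alpha(\beta,p)$. The only conceptual point worth flagging is the supermodularity-driven reduction from the maximum over all pairs $(S, i\in S)$ to the maximum over $i\in[k]$ alone, which is what makes $\tilde\alpha$ depend only on the currents out of the full state $[k]$.
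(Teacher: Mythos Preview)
Your proposal is correct and follows essentially the same route as the paper's own proof: use Lemma~\ref{lem_eqn_varphi} to verify the algorithm-side constraints~(\ref{eqn_varphi_alg}), then invoke supermodularity (Lemma~\ref{lem_supermodularity}) to collapse the adversary-side maximum over all pairs $(S,i\in S)$ to the single set $S=[k]$, yielding exactly $\tilde\alpha(\beta,p)$. The paper's proof is just a terser version of what you wrote.
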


\begin{proof}
Lemma \ref{lem_eqn_varphi} assures that the constraints (\ref{eqn_varphi_alg}) hold. To satisfy the set of constraints given by (\ref{eqn_varphi_adv}), we choose
\[\alpha=\left(\sum_{j=1}^kp_j\beta_j\right)\max_{S\subseteq[k],i\in S}\frac{\varphi_S(p)-\varphi_{S\setminus\{i\}}(p)}{\beta_i}\]
Due to the supermodularity property from Lemma \ref{lem_supermodularity}, the maximum is attained for $S=[k]$. Thus we have
\[\alpha=\left(\sum_{j=1}^kp_j\beta_j\right)\max_{i\in[k]}\frac{\varphi_{[k]}(p)-\varphi_{[k]\setminus\{i\}}(p)}{\beta_i}=\left(\sum_{j=1}^kp_j\beta_j\right)\max_{i\in[k]}\frac{\mathbf{I}^{[k]\setminus\{i\}}_{[k]}(p)}{p_i\beta_i}\]
\end{proof}

With Theorem \ref{thm_competitive} in place, we are ready to prove Theorem \ref{thm_main_ub}.

\begin{proof}[\textbf{Proof of Theorem \ref{thm_main_ub}}]
Let $\beta_1,\ldots,\beta_k$ be the weights of the servers, and assume $\beta_1\leq\cdots\leq\beta_k$, without loss of generality. The required memoryless algorithm behaves as follows. Let $p_i=C_{[k]\setminus\{i\}}/\beta_i$ for all $i$. On receiving a request which is not covered by any server, the algorithm serves it with the $i^{\text{\tiny{th}}}$ server with probability $p_i/P$, where $P=\sum_{j=1}^kp_j$. By Lemma \ref{lem_order} and our assumption: $\beta_1\leq\cdots\leq\beta_k$, we have $p_1\geq\ldots\geq p_k$. Thus, we can apply Theorem \ref{thm_competitive}, and hence, the competitive ratio of our algorithm is at most
\[\tilde{\alpha}(\beta,p)=\left(\sum_{j=1}^kp_j\beta_j\right)\max_{i\in[k]}\frac{\mathbf{I}^{[k]\setminus\{i\}}_{[k]}(p)}{p_i\beta_i}=\left(\sum_{j=1}^kC_{[k]\setminus\{j\}}\right)\max_{i\in[k]}\frac{\mathbf{I}^{[k]\setminus\{i\}}_{[k]}(p)}{C_{[k]\setminus\{i\}}}\leq\alpha_k\]
where the last inequality follows from Lemma \ref{lem_alpha}, and Lemma \ref{lem_currents_abs_bound}. Note that since the currents are invariant under scaling of $p$, so is $\tilde{\alpha}(\beta,p)$, and hence, $P$ can be ignored.
\end{proof}

\begin{corollary}[to Theorem \ref{thm_competitive}]
The Harmonic algorithm for the weighted $k$-server problem on uniform spaces has a competitive ratio of $k\alpha_k$ against an online adaptive adversary.
\end{corollary}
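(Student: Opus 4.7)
The plan is to apply Theorem~\ref{thm_competitive} to the Harmonic probability distribution and then read off the bound using the constants $C_S$ from Definition~\ref{def_C}. First, assume without loss of generality that $\beta_1\leq\cdots\leq\beta_k$, and set $p_i=1/\beta_i$. This is the Harmonic choice (up to the harmless normalisation $P=\sum_jp_j$, which the observation preceding the definition of $\tilde{\alpha}$ tells us does not affect $\tilde{\alpha}(\beta,p)$). Because $\beta_i$ is non-decreasing, $p_1\geq p_2\geq\cdots\geq p_k$, so $p$ satisfies the monotonicity hypothesis needed to invoke the machinery of Section~\ref{sec_upperbound}.

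Next I would plug these probabilities into the formula of Theorem~\ref{thm_competitive}. With $p_i=1/\beta_i$ every product $p_j\beta_j$ equals $1$, so $\sum_{j=1}^k p_j\beta_j=k$, and similarly $p_i\beta_i=1$ for each $i$. Hence
\[
\tilde{\alpha}(\beta,p)=k\cdot\max_{i\in[k]}\mathbf{I}^{[k]\setminus\{i\}}_{[k]}(p).
\]

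The final step is to bound each current $\mathbf{I}^{[k]\setminus\{i\}}_{[k]}(p)$ by a constant independent of $p$. For $S=[k]\setminus\{i\}$ the set $[i-1]\subseteq S$, so $i$ is the smallest integer not in $S$, which is exactly the hypothesis required by Lemma~\ref{lem_currents_abs_bound}. That lemma therefore gives
\[
\mathbf{I}^{[k]\setminus\{i\}}_{[k]}(p)\leq C_{[k]\setminus\{i\}}\qquad\text{for every }i\in[k].
\]
Since each $C_{[k]\setminus\{j\}}$ is a positive constant, part~(2) of Lemma~\ref{lem_alpha} yields $\max_i C_{[k]\setminus\{i\}}\leq\sum_{j=1}^k C_{[k]\setminus\{j\}}=\alpha_k$. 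Chaining these inequalities concludes $\tilde{\alpha}(\beta,p)\leq k\alpha_k$, and Theorem~\ref{thm_competitive} then gives the desired competitive ratio.

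There is essentially no obstacle here beyond careful bookkeeping: the Harmonic distribution happens to make $\sum_jp_j\beta_j$ equal to $k$ and $p_i\beta_i$ equal to $1$ simultaneously, which is precisely what turns the general bound of Theorem~\ref{thm_competitive} into the slack bound $k\cdot\max_iC_{[k]\setminus\{i\}}$. The only subtlety worth flagging is that we must order the servers so that $p$ is non-increasing before the earlier lemmas (current monotonicity, supermodularity, boundedness) are available; the WLOG reordering of the weights takes care of this.
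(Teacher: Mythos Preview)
Your argument is correct and is essentially the paper's own proof: both plug the Harmonic distribution into Theorem~\ref{thm_competitive}, observe that $(\sum_jp_j\beta_j)/(p_i\beta_i)=k$ for every $i$, and then bound $\max_i\mathbf{I}^{[k]\setminus\{i\}}_{[k]}(p)\leq\max_iC_{[k]\setminus\{i\}}\leq\sum_iC_{[k]\setminus\{i\}}=\alpha_k$ via Lemma~\ref{lem_currents_abs_bound} and Lemma~\ref{lem_alpha}(2). Your explicit check that $p$ is non-increasing (needed for Theorem~\ref{thm_competitive}) is a point the paper leaves implicit.
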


\begin{proof}
The probabilities for the Harmonic algorithm are given by $p_i=(1/\beta_i)/\sum_{j=1}^k(1/\beta_j)$. Therefore, $1/(p_i\beta_i)=\sum_{j=1}^k(1/\beta_j)$ for all $i$. Also, $\sum_{j=1}^kp_j\beta_j=k/\sum_{j=1}^k(1/\beta_j)$. By Theorem \ref{thm_competitive} the competitive ratio is given by
\[\alpha=\left(\sum_{j=1}^kp_j\beta_j\right)\max_{i\in[k]}\frac{\mathbf{I}^{[k]\setminus\{i\}}_{[k]}(p)}{p_i\beta_i}=k\cdot\max_{i\in[k]}\mathbf{I}^{[k]\setminus\{i\}}_{[k]}(p)\leq k\cdot\max_{i\in[k]}C_{[k]\setminus\{i\}}\leq k\cdot\sum_{i=1}^kC_{[k]\setminus\{i\}}=k\alpha_k\]
\end{proof}

\section{Proof of the Lower Bound}\label{sec_lowerbound}

In this section, we show that it is not possible to improve the upper bound of $\alpha_k$ on the competitive ratio of randomized memoryless algorithms for the weighted $k$-server problem, on uniform spaces. We will exhibit costs $\beta$ such that, irrespective of the probability distribution chosen by an algorithm, an adversary can force a competitive ratio approaching $\alpha_k$.

\subsection{Constructing Adversaries}

As a first step towards proving Theorem \ref{thm_main_lb}, we prove that Theorem \ref{thm_competitive} is essentially tight. For an algorithm which uses probabilities $p=(p_1,\ldots,p_k)$ when the weights are $\beta=(\beta_1,\ldots,\beta_k)$, we prove a lower bound on the competitive ratio, which goes arbitrarily close to $\tilde{\alpha}(\beta,p)$, as the separation between the weights grows unbounded.

Fix $\beta_1\leq\cdots\leq\beta_k$, the weights of the servers, and let $s=\max_{1\leq i<k}\beta_i/\beta_{i+1}$. Fix some online algorithm. For each $t\in[k]$, we define an adversary $\mathcal{A}_t$, who gives requests from a uniform metric space with $2k+1$ points. As before, at any point of time let $a_i$ (resp. $s_i$) denote the position of the adversary's (resp. algorithm's) $i^{\text{\tiny{th}}}$ server. The adversary maintains the following invariant whenever it gives a request.
\begin{equation}\label{eqn_adv_invariant}
a_i\neq s_j\text{ for all }i<j\text{; }i,j\in[k]\text{ and }a_i\text{'s are all distinct.}
\end{equation}
The strategy of the adversary $\mathcal{A}_t$ is the following.
\begin{enumerate}
\item\label{item_req_eq} If $a_i=s_i$ for all $i\in[k]$, then move the $t^{\text{\tiny{th}}}$ server to a point not occupied by any of the $2k$ servers, (in other words different from $a_i$ and $s_i$ for all $i$), and request that point.
\item\label{item_req_neq} Else, find the smallest $i$ such that $a_i\neq s_i$. (Invariant (\ref{eqn_adv_invariant}) ensures that $a_i$ is not occupied by the algorithm.) Request $a_i$.
\item\label{item_adjust} If invariant (\ref{eqn_adv_invariant}) is violated for some $i,j$ after the algorithm serves the request, then move the $i^{\text{\tiny{th}}}$ server to a point not occupied by any of the $2k$ servers.
\end{enumerate}

Note that $t$ plays a role only in step \ref{item_req_eq}, and that the adversary pays only in steps \ref{item_req_eq} and \ref{item_adjust}. Let $ADV$ and $ADV'$ denote the total cost paid by the adversary in steps \ref{item_req_eq} and \ref{item_adjust} respectively, and let $ALG$ be the (expected) total cost paid by the online algorithm. The following lemma is immediate.

\begin{lemma}\label{lem_advprime}
$ALG\geq ADV'/s$.
\end{lemma}

\begin{proof}
Every time the adversary executes step \ref{item_adjust} and moves its $i^{\text{\tiny{th}}}$ server out of a point, the algorithm must have moved its $j^{\text{\tiny{th}}}$ server, for some $j>i$, to that point from elsewhere. Thus, the algorithm paid $\beta_j\geq\beta_i/s$, whereas the adversary pays $\beta_i$.
\end{proof}

Note that the above lemma holds even if the algorithm is not memoryless. Now the next two claims assume that the algorithm is memoryless, and prove lower bounds on its competitive ratio. Fix a randomized memoryless algorithm, which moves the $i^{\text{\tiny{th}}}$ server with probability $p_i$, whenever there is no server on the requested point.

\begin{theorem}\label{thm_lowerbound}
Let $t\in[k]$ be such that
\[\tilde{\alpha}(\beta,p)=\left(\sum_{j=1}^kp_j\beta_j\right)\max_{i\in[k]}\frac{\mathbf{I}^{[k]\setminus\{i\}}_{[k]}(p)}{p_i\beta_i}=\left(\sum_{j=1}^kp_j\beta_j\right)\cdot\frac{\mathbf{I}^{[k]\setminus\{t\}}_{[k]}(p)}{p_t\beta_t}\]
(in other words, $i=t$ achieves the maximum). Then the competitive ratio of the algorithm against $\mathcal{A}_t$ is at least $\tilde{\alpha}(\beta,p)/(1+s\tilde{\alpha}(\beta,p))$.
\end{theorem}

\begin{proof}
We prove $ALG\geq\tilde{\alpha}(\beta,p)ADV$. The theorem follows from this and Lemma \ref{lem_advprime}, since the total cost paid by the adversary is $ADV+ADV'$. As before, at any point of time let $S=\{i\suchthat a_i=s_i\}\subseteq[k]$, and $S$ will denote the state of the system. We will again assign a potential to each state, but this time we will ensure the following.
\begin{enumerate}
\item When the adversary $\mathcal{A}_t$ moves its $t^{\text{\tiny{th}}}$ server in step \ref{item_req_eq}, the increase in potential is \textbf{at least} $\tilde{\alpha}(\beta,p)\cdot\beta_t$.
\item When the algorithm is moves a server, the expected decrease in potential is \textbf{at most} the expected cost paid by the algorithm.
\end{enumerate}
Note that when the adversary moves its servers in step \ref{item_adjust} to ensure $a_i\neq s_j$ for all $i<j$, the state remains the same. Thus, the above two statements imply $ALG\geq\tilde{\alpha}(\beta,p)ADV$. Interestingly, the potentials that we assign to the states here are same as those that we assigned in the proof of the upper bound. That is, $\phi_S=-(\sum_{j=1}^kp_j\beta_j)\varphi_S(p)$. Note however, that we have not made any assumption about whether $p$ is a non-decreasing sequence.

Consider the situation when the adversary incurs a cost of $\beta_t$, in step \ref{item_req_eq}. Since $a_i=s_i$ for all $i$, the state is $[k]$. The state after the move is $[k]\setminus\{t\}$, and the change in potential is
\[\phi_{[k]\setminus\{t\}}-\phi_{[k]}=\left(\sum_{j=1}^kp_j\beta_j\right)(\varphi_{[k]}(p)-\varphi_{[k]\setminus\{t\}}(p))=\left(\sum_{j=1}^kp_j\beta_j\right)\cdot\frac{\mathbf{I}^{[k]\setminus\{t\}}_{[k]}(p)}{p_{t}}=\tilde{\alpha}(\beta,p)\cdot\beta_{t}\]

Now, consider the algorithm's move in response to a request, when the system is in state $S\subsetneq[k]$. Let $i$ be the smallest integer not in $S$. By step \ref{item_req_neq} of the adversary, the next request is $a_i$, and this point is not occupied by any of the algorithm's servers. Hence, the algorithm must incur a cost $\sum_{j=1}^kp_j\beta_j$ in expectation. The expected change in potential is
\begin{eqnarray*}
 & & p_i\left(\phi_{S\cup\{i\}}-\phi_S\right)+\sum_{j\in S}p_j\left(\phi_{S\setminus\{j\}}-\phi_S\right)\\
 & = & -\left(\sum_{j=1}^kp_j\beta_j\right)\left(p_i\left(\varphi_{S\cup\{i\}}(p)-\varphi_S(p)\right)-\sum_{j\in S}p_j\left(\varphi_S(p)-\varphi_{S\setminus\{j\}}(p)\right)\right)\\
 & = & -\sum_{j=1}^kp_j\beta_j
\end{eqnarray*}
where the last equality follows from Lemma \ref{lem_eqn_varphi_tight}. Thus, we have proved $ALG\geq\tilde{\alpha}(\beta,p)ADV$.
\end{proof}

Since $\beta_1\leq\cdots\leq\beta_k$, we expect a reasonable algorithm to choose probabilities $p_1\geq\cdots\geq p_k$, at least when the ratio $\beta_{i+1}/\beta_i$ is sufficiently large for all $i$. We prove the next lemma in order to rule out the possibility of a ``counter-intuitive'' algorithm being competitive, where the algorithm always chooses $p_j>p_i$ for some $j>i$, no matter how large $\beta_j/\beta_i$ is.

\begin{lemma}\label{lem_lowerbound}
For any $i\in[k]$, the competitive ratio of the algorithm against $\mathcal{A}_i$ is at least $\gamma_i(\beta,p)/(1+s\gamma_i(\beta,p))$, where $\gamma_i(\beta,p)=\left(\sum_{j=1}^kp_j\beta_j\right)/p_i\beta_i$.
\end{lemma}

\begin{proof}
Analogous to Theorem \ref{thm_lowerbound}, it is sufficient to prove $ALG\geq\gamma_i(\beta,p)ADV$. Say that a new phase begins whenever $a_j=s_j$ for all $j\in[k]$. We prove that in every phase the change in $ALG$ is at least $\gamma_i(\beta,p)$ times the change in $ADV$. Observe that the $i^{\text{\tiny{th}}}$ server of the algorithm must move at least once in every phase. Suppose this happens for the first time on the $m^{\text{\tiny{th}}}$ request. Then $m$ is a geometrically distributed random variable with parameter $p_i$ and hence $\mathbb{E}[m]=1/p_i$.

For each of the first $m-1$ requests, the algorithm does not move its $i^{\text{\tiny{th}}}$ server, and moves its $j^{\text{\tiny{th}}}$ server with probability $p_j/(1-p_i)$, for $j\neq i$. Hence the algorithm pays $\left(\sum_{j\in[k]\setminus\{i\}}p_j\beta_j\right)/(1-p_i)$ in expectation on each of the first $m-1$ requests, and $\beta_i$ on the $m^{\text{\tiny{th}}}$ one. The total expected cost paid on the first $m$ requests, conditioned on $m$, is
\[(m-1)\times\frac{\sum_{j\in[k]\setminus\{i\}}p_j\beta_j}{1-p_i}+\beta_i\]
Thus, the expected cost paid by the algorithm in a phase, until it moves its $i^{\text{\tiny{th}}}$ server for the first time, is given by
\begin{eqnarray*}
\mathbb{E}\left[(m-1)\times\frac{\sum_{j\in[k]\setminus\{i\}}p_j\beta_j}{1-p_i}+\beta_i\right] & = & \mathbb{E}[m-1]\times\frac{\sum_{j\in[k]\setminus\{i\}}p_j\beta_j}{1-p_i}+\beta_i\\
 & = & \left(\frac{1}{p_i}-1\right)\times\frac{\sum_{j\in[k]\setminus\{i\}}p_j\beta_j}{1-p_i}+\beta_i\\
 & = & \frac{\sum_{j\in[k]\setminus\{i\}}p_j\beta_j}{p_i}+\beta_i\\
 & = & \frac{\sum_{j\in[k]}p_j\beta_j}{p_i}=\gamma_i(\beta,p)\times\beta_i
\end{eqnarray*}
This is a lower bound on the change in $ALG$ in a phase. Further, step \ref{item_req_eq} of the adversary $\mathcal{A}_i$ is executed exactly once in a phase, and hence, $ADV$ increases by exactly $\beta_i$ in every phase. Thus, we have proved $ALG\geq\gamma_i(\beta,p)ADV$.
\end{proof}

\subsection{Proof of Theorem \ref{thm_main_lb}}

With Theorem \ref{thm_lowerbound} in place, proving Theorem \ref{thm_main_lb} reduces to proving that $\frac{\tilde{\alpha}(\beta,p)}{1+s\tilde{\alpha}(\beta,p)}$ can be forced to be arbitrarily close to $\alpha_k$, where $s=\max_{1\leq i<k}\beta_i/\beta_{i+1}$, with a suitably chosen $\beta$. Let the weights of the servers, parameterized by $r>1$, be given by $\beta_i(r)=r^{i-1}$. Consider a randomized memoryless algorithm with a bounded competitive ratio, which chooses a probability distribution $p(r)$ for the weights $\beta(r)$. If $\limsup_{r\rightarrow\infty}p_j(r)/p_i(r)=\delta>0$ for some $i<j$, then there exist arbitrarily large $r$ such that $p_j(r)/p_i(r)\geq\delta$. Then by Lemma \ref{lem_lowerbound}, the adversary $\mathcal{A}_i$ forces an unbounded lower bound on the competitive ratio, which is a contradiction. Thus, we must have $\limsup_{r\rightarrow\infty}p_j(r)/p_i(r)=0$, and since $p_j(r)/p_i(r)\geq0$, we have
\begin{equation}\label{eqn_lim_p}
\lim_{r\rightarrow\infty}\frac{p_j(r)}{p_i(r)}=0\text{ for all }i<j\text{; }i,j\in[k]
\end{equation} 
Thus, for a sufficiently large $r$, we must have ${p_1(r)\geq\cdots\geq p_k(r)}$. As a consequence of the next lemma, we prove that the supermodularity inequalities, that we applied in the proof of Lemma \ref{lem_currents_abs_bound}, are all tight in the limit as $r\rightarrow\infty$.

\begin{lemma}
Let $S\subsetneq[k]$ and $i$ be the smallest integer not in $S$. Suppose $j\notin S$ and $i<j<k$. Then $\lim_{r\rightarrow\infty}\left[\mathbf{I}^{S\cup\{i\}}_{S\cup\{i,j\}}(p(r))-\mathbf{I}^{S}_{S\cup\{j\}}(p(r))\right]=0$.
\end{lemma}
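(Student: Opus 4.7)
The plan is to express the claimed gap as a ratio $p_j/p_i$ times another supermodularity gap, and then bound the second gap uniformly using Lemma \ref{lem_currents_abs_bound}. By definition $\mathbf{I}^T_{T\cup\{l\}}(p)=p_l(\varphi_{T\cup\{l\}}(p)-\varphi_T(p))$ for any $T$ and $l\notin T$. Applying this to the four sets $S$, $S\cup\{i\}$, $S\cup\{j\}$, $S\cup\{i,j\}$, both supermodularity gaps factor through the common quantity $D(p):=\varphi_{S\cup\{i,j\}}(p)+\varphi_S(p)-\varphi_{S\cup\{i\}}(p)-\varphi_{S\cup\{j\}}(p)$:
\begin{align*}
\mathbf{I}^{S\cup\{i\}}_{S\cup\{i,j\}}(p)-\mathbf{I}^S_{S\cup\{j\}}(p) &= p_j\,D(p),\\
\mathbf{I}^{S\cup\{j\}}_{S\cup\{i,j\}}(p)-\mathbf{I}^S_{S\cup\{i\}}(p) &= p_i\,D(p).
\end{align*}
Eliminating $D$ gives the key identity
\[\mathbf{I}^{S\cup\{i\}}_{S\cup\{i,j\}}(p)-\mathbf{I}^S_{S\cup\{j\}}(p)=\frac{p_j}{p_i}\bigl(\mathbf{I}^{S\cup\{j\}}_{S\cup\{i,j\}}(p)-\mathbf{I}^S_{S\cup\{i\}}(p)\bigr).\]

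Next, the right-hand gap will be bounded by an $r$-independent constant, once $r$ is large enough that $p_1(r)\geq\cdots\geq p_k(r)$. Since $i$ is the smallest element missing from $S$, one has $[i-1]\subseteq S$, and because $j>i$ this gives $[i-1]\subseteq S\cup\{j\}$ with $i\notin S\cup\{j\}$; hence $i$ is also the smallest element missing from $S\cup\{j\}$. Lemma \ref{lem_currents_abs_bound} therefore yields $\mathbf{I}^{S\cup\{j\}}_{S\cup\{i,j\}}(p(r))\leq C_{S\cup\{j\}}$. Meanwhile, the subtracted current $\mathbf{I}^S_{S\cup\{i\}}(p(r))$ is non-negative: using supermodularity (Lemma \ref{lem_supermodularity}) with the smaller set $\emptyset$, it dominates $\mathbf{I}^{\emptyset}_{\{i\}}(p(r))$, which is at least $1$ by Lemma \ref{lem_eqn_varphi} applied to $S=\emptyset$. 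Combined with the supermodularity-of-currents lower bound, this gives
\[0\leq\mathbf{I}^{S\cup\{j\}}_{S\cup\{i,j\}}(p(r))-\mathbf{I}^S_{S\cup\{i\}}(p(r))\leq C_{S\cup\{j\}}.\]

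The conclusion then follows by combining the identity with this bound and invoking (\ref{eqn_lim_p}) to get $p_j(r)/p_i(r)\to 0$:
\[0\leq\mathbf{I}^{S\cup\{i\}}_{S\cup\{i,j\}}(p(r))-\mathbf{I}^S_{S\cup\{j\}}(p(r))\leq\frac{p_j(r)}{p_i(r)}\,C_{S\cup\{j\}}\;\xrightarrow[r\to\infty]{}\;0.\]
The main conceptual step is spotting the factorization that isolates the ratio $p_j/p_i$; once this is in hand, every remaining ingredient is a lemma already proved. The only care-point is verifying that $i$ remains the smallest missing integer from $S\cup\{j\}$ so that Lemma \ref{lem_currents_abs_bound} applies, but this is immediate from $i<j$ and the hypothesis on $S$.
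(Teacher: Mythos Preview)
Your proof is correct and follows essentially the same approach as the paper's. Both derive the identity
\[\mathbf{I}^{S\cup\{i\}}_{S\cup\{i,j\}}(p)-\mathbf{I}^S_{S\cup\{j\}}(p)=\frac{p_j}{p_i}\bigl(\mathbf{I}^{S\cup\{j\}}_{S\cup\{i,j\}}(p)-\mathbf{I}^S_{S\cup\{i\}}(p)\bigr)\]
(the paper obtains it by expanding $\varphi_{S\cup\{i,j\}}-\varphi_S$ two ways, you via the common factor $D(p)$, which is algebraically the same) and then conclude by combining boundedness of currents with $p_j(r)/p_i(r)\to 0$; your version is just slightly more explicit about the two-sided bound on the bracketed term.
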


\begin{proof}
On one hand we have
\begin{eqnarray*}
\varphi_{S\cup\{i,j\}}(p(r))-\varphi_S(p(r)) & = & \left[\varphi_{S\cup\{i,j\}}(p(r))-\varphi_{S\cup\{i\}}(p(r))\right]+\left[\varphi_{S\cup\{i\}}(p(r))-\varphi_S(p(r))\right]\\
 & = & \frac{\mathbf{I}^{S\cup\{i\}}_{S\cup\{i,j\}}(p(r))}{p_j(r)}+\frac{\mathbf{I}^S_{S\cup\{i\}}(p(r))}{p_i(r)}
\end{eqnarray*}
On the other hand
\begin{eqnarray*}
\varphi_{S\cup\{i,j\}}(p(r))-\varphi_S(p(r)) & = & \left[\varphi_{S\cup\{i,j\}}(p(r))-\varphi_{S\cup\{j\}}(p(r))\right]+\left[\varphi_{S\cup\{j\}}(p(r))-\varphi_S(p(r))\right]\\
 & = & \frac{\mathbf{I}^{S\cup\{j\}}_{S\cup\{i,j\}}(p(r))}{p_i(r)}+\frac{\mathbf{I}^{S}_{S\cup\{j\}}(p(r))}{p_j(r)}
\end{eqnarray*}
Thus
\begin{eqnarray*}
\frac{\mathbf{I}^{S\cup\{i\}}_{S\cup\{i,j\}}(p(r))}{p_j(r)}+\frac{\mathbf{I}^{S}_{S\cup\{i\}}(p(r))}{p_i(r)} & = & \frac{\mathbf{I}^{S\cup\{j\}}_{S\cup\{i,j\}}(p(r))}{p_i(r)}+\frac{\mathbf{I}^{S}_{S\cup\{j\}}(p(r))}{p_j(r)}\\
\mathbf{I}^{S\cup\{i\}}_{S\cup\{i,j\}}(p(r))-\mathbf{I}^{S}_{S\cup\{j\}}(p(r)) & = & \frac{p_j(r)}{p_i(r)}\left[\mathbf{I}^{S\cup\{j\}}_{S\cup\{i,j\}}(p(r))-\mathbf{I}^{S}_{S\cup\{i\}}(p(r))\right]
\end{eqnarray*}
Taking limits as $r\rightarrow\infty$, noting that currents are bounded (Lemma \ref{lem_currents_abs_bound}), and using (\ref{eqn_lim_p}), we get the required result.
\end{proof}

By repeatedly applying the above lemma, we have for any $S\subsetneq[k]$ and $j\notin S$,
\[\lim_{r\rightarrow\infty}\left[\mathbf{I}^{S\cup[j-1]}_{S\cup[j]}(p(r))-\mathbf{I}^S_{S\cup\{j\}}(p(r))\right]=0\]
or in other words, for $S\subseteq[k]$ and $j\in S$
\begin{equation}\label{eqn_supermod_tight}
\lim_{r\rightarrow\infty}\left[\mathbf{I}^{S\setminus\{j\}\cup[j-1]}_{S\cup[j-1]}(p(r))-\mathbf{I}^{S\setminus\{j\}}_S(p(r))\right]=0
\end{equation}

Recall that in Lemma \ref{lem_currents_abs_bound}, we proved that if $i$ is the smallest integer not in $S$, then $\mathbf{I}^{S}_{S\cup\{i\}}(p)\leq C_S$ for any non-increasing $p$, where the constant $C_S$ was given by Definition \ref{def_C}. We will now prove that as $r$ goes to $\infty$, $\mathbf{I}^{S}_{S\cup\{i\}}(p(r))$ approaches $C_S$.

\begin{lemma}\label{lem_currents_abs_bound_tight}
For any $S\subsetneq[k]$, let $i$ be the smallest integer not in $S$. Then $\lim_{r\rightarrow\infty}\mathbf{I}^{S}_{S\cup\{i\}}(p(r))=C_S$.
\end{lemma}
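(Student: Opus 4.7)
The plan is to prove the claim by induction on $S$ in the co-lex order, combining Lemma \ref{lem_eqn_varphi_tight} (the current balance identity), equation (\ref{eqn_supermod_tight}) (asymptotic tightness of supermodularity), and Definition \ref{def_C} (the recursive definition of $C_S$). This induction mirrors the proof of Lemma \ref{lem_currents_abs_bound}, but with inequalities replaced by limits.

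The base case $S=\emptyset$ is immediate: by equation (\ref{eqn_f2}), $\mathbf{I}^{\emptyset}_{\{1\}}(p) = f_{\{1\}}(p) = 1 = C_\emptyset$ for every $p$. For the inductive step, fix $S \subsetneq [k]$ with smallest missing element $i$, and assume the claim holds for every set preceding $S$ in the co-lex order. Applying Lemma \ref{lem_eqn_varphi_tight} gives
\[
\mathbf{I}^{S}_{S\cup\{i\}}(p(r)) \;=\; 1 + \sum_{j \in S} \mathbf{I}^{S \setminus \{j\}}_{S}(p(r)).
\]
It therefore suffices to show that $\lim_{r \to \infty} \mathbf{I}^{S\setminus\{j\}}_S(p(r)) = C_{S \setminus \{j\} \cup [j-1]}$ for each $j \in S$.

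Fix such a $j$. By equation (\ref{eqn_supermod_tight}),
\[
\lim_{r \to \infty}\bigl[\mathbf{I}^{S\setminus\{j\}\cup[j-1]}_{S\cup[j-1]}(p(r)) - \mathbf{I}^{S\setminus\{j\}}_S(p(r))\bigr] = 0,
\]
so the two currents share the same limit. Now I verify that the inductive hypothesis applies to the first of these currents. In the set $(S \setminus \{j\}) \cup [j-1]$, the elements $1, \ldots, j-1$ are all present while $j$ is not, so $j$ is the smallest missing element; hence the current $\mathbf{I}^{S\setminus\{j\}\cup[j-1]}_{S\cup[j-1]}(p(r))$ is of the form considered in the statement of the lemma, applied to the set $S \setminus \{j\} \cup [j-1]$. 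Moreover, this set precedes $S$ in the co-lex order: $j$ lies in $S$ but not in $S \setminus \{j\} \cup [j-1]$, and the two sets agree on membership of every integer greater than $j$. The inductive hypothesis then yields
\[
\lim_{r \to \infty}\mathbf{I}^{S\setminus\{j\}\cup[j-1]}_{S\cup[j-1]}(p(r)) \;=\; C_{S \setminus \{j\} \cup [j-1]}.
\]

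Combining everything,
\[
\lim_{r\to\infty}\mathbf{I}^{S}_{S\cup\{i\}}(p(r)) \;=\; 1 + \sum_{j \in S} C_{S \setminus \{j\} \cup [j-1]} \;=\; C_S,
\]
where the last equality is Definition \ref{def_C}. The only subtle step is recognizing that applying (\ref{eqn_supermod_tight}) once replaces the current $\mathbf{I}^{S\setminus\{j\}}_S$ (whose index set $S$ may not have its lower initial segment filled in) with a current of the precise form to which the inductive hypothesis applies, and that this replacement simultaneously guarantees the set becomes co-lex-smaller than $S$. No further obstacles arise; the induction closes cleanly.
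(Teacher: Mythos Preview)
Your proof is correct and follows exactly the same approach as the paper: induction on $S$ in the co-lex order, using Lemma \ref{lem_eqn_varphi_tight} to expand $\mathbf{I}^S_{S\cup\{i\}}$, equation (\ref{eqn_supermod_tight}) to replace each $\mathbf{I}^{S\setminus\{j\}}_S$ by $\mathbf{I}^{S\setminus\{j\}\cup[j-1]}_{S\cup[j-1]}$, the induction hypothesis to evaluate the latter, and Definition \ref{def_C} to conclude. Your write-up merely makes explicit the verifications (that $j$ is the smallest missing element of $S\setminus\{j\}\cup[j-1]$ and that this set precedes $S$ in the co-lex order) which the paper leaves implicit.
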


\begin{proof}
We again prove the statement by induction on the position of $S$ in the colex order given by Definition \ref{def_colex}. For the base case, when $S=\emptyset$ we indeed have $\mathbf{I}^{\emptyset}_{\{1\}}(p(r))=1=C_{\emptyset}$ for all $r$.

For the inductive case, assume the claim holds for all finite subsets of $\mathbb{N}$ which precede a set $S$ in the colex order. Let $i$ be the smallest integer not in $S$. Then we have
\begin{eqnarray*}
\lim_{r\rightarrow\infty}\mathbf{I}^S_{S\cup\{i\}}(p(r)) & = & 1+\sum_{j\in S}\lim_{r\rightarrow\infty}\mathbf{I}^{S\setminus\{j\}}_S(p(r))=1+\sum_{j\in S}\lim_{r\rightarrow\infty}\mathbf{I}^{S\setminus\{j\}\cup[j-1]}_{S\cup[j-1]}(p(r))\\
 & = & 1+\sum_{j\in S}C_{S\setminus\{j\}\cup[j-1]}=C_S
\end{eqnarray*}
where the first equality is by Lemma \ref{lem_eqn_varphi_tight}, second due to (\ref{eqn_supermod_tight}), third by induction hypothesis, since the smallest integer not in $S\setminus\{j\}\cup[j-1]$ is $j$, and the fourth by Definition \ref{def_C}.
\end{proof}

\begin{proof}[\textbf{Proof of Theorem \ref{thm_main_lb}}]
We have
\begin{eqnarray}
\liminf_{r\rightarrow\infty}\tilde{\alpha}(\beta(r),p(r)) & = & \liminf_{r\rightarrow\infty}\left(\sum_{j=1}^kp_j(r)\beta_j(r)\right)\max_{i\in[k]}\frac{\mathbf{I}^{[k]\setminus\{i\}}_{[k]}(p(r))}{p_i(r)\beta_i(r)}\nonumber\\
 & \geq & \liminf_{r\rightarrow\infty}\left(\sum_{j=1}^kp_j(r)\beta_j(r)\right)\times\frac{\sum_{i=1}^k\mathbf{I}^{[k]\setminus\{i\}}_{[k]}(p(r))}{\sum_{i=1}^kp_i(r)\beta_i(r)}\nonumber\\
 & = & \liminf_{r\rightarrow\infty}\sum_{i=1}^k\mathbf{I}^{[k]\setminus\{i\}}_{[k]}(p(r))=\sum_{i=1}^kC_{[k]\setminus\{i\}}=\alpha_k\label{eqn_liminf}
\end{eqnarray}
where the penultimate equality is given by Lemma \ref{lem_currents_abs_bound_tight}, and the last one by Lemma \ref{lem_alpha}. Thus,
\begin{eqnarray*}
\liminf_{r\rightarrow\infty}\frac{\tilde{\alpha}(\beta(r),p(r))}{1+\tilde{\alpha}(\beta(r),p(r))/r} & = & \liminf_{r\rightarrow\infty}\left(\frac{1}{\tilde{\alpha}(\beta(r),p(r))}+\frac{1}{r}\right)^{-1}=\left[\limsup_{r\rightarrow\infty}\left(\frac{1}{\tilde{\alpha}(\beta(r),p(r))}+\frac{1}{r}\right)\right]^{-1}\\
 & \geq & \left[\limsup_{r\rightarrow\infty}\frac{1}{\tilde{\alpha}(\beta(r),p(r))}+\limsup_{r\rightarrow\infty}\frac{1}{r}\right]^{-1}
\geq\alpha_k
\end{eqnarray*}
where the first inequality follows from sub-additivity of the $\limsup$ operator, and the last inequality from (\ref{eqn_liminf}). Thus, for any $\varepsilon>0$, there exists an $R$ such that for all $r>R$, we have
\[\frac{\tilde{\alpha}(\beta(r),p(r))}{1+\tilde{\alpha}(\beta(r),p(r))/r}\geq\alpha_k-\varepsilon\]
Using Theorem \ref{thm_lowerbound} with $s=\max_{1\leq i<k}\beta_i(r)/\beta_{i+1}(r)=1/r$, we conclude that the competitive ratio of the algorithm is no less than $\alpha_k$.
\end{proof}

\section{Concluding Remarks}

We have proved that there exists a competitive memoryless algorithm for the weighted $k$-server problem on uniform metric spaces. This is in contrast to the line metric, which does not admit a competitive memoryless algorithm, even with two servers. The competitive ratio $\alpha_k$, that we establish, is given by $\alpha_k=\alpha_{k-1}^2+3\alpha_{k-1}+1$. We can bound $\alpha_k$ as follows. We have $\alpha_k+2=(\alpha_{k-1}+2)^2-(\alpha_{k-1}+2)+1<(\alpha_{k-1}+2)^2$. Therefore, $\alpha_k+2<(\alpha_t+2)^{2^{k-t}}=[(\alpha_t+2)^{2^{-t}}]^{2^k}$ for any $t<k$. For $t=4$, one can verify that $(\alpha_t+2)^{2^{-t}}<1.6$, and hence $\alpha_k<1.6^{2^k}$, as promised in the introduction. We have also proved that $\alpha_k$ is the best possible competitive ratio of memoryless algorithms for the weighted $k$-server problem on uniform metric spaces. With this, we settle the problem completely.

The immediate increment to our results would perhaps be to determine whether there exists a competitive memoryless algorithm for the weighted server problem on star metrics. This problem translates to having a weight $\beta_i$ for the $i^{\text{\tiny{th}}}$ cache location, and a cost $c_t$ with each page $t$; the overall cost of replacing page $t$ by page $t'$ at the $i^{\text{\tiny{th}}}$ cache location being $\beta_i(c_t+c_{t'})$. It would be interesting to see whether our techniques work on the star metric too.

We improve the upper bound on the deterministic competitive ratio by \cite{FiatR94} for the weighted server problem on uniform metrics. However, our bound is still doubly exponential, whereas the lower bound is only exponential in the number of servers. It would be interesting to reduce this gap. The prime candidate for improving the upper bound is perhaps the (generalized) work function algorithm, which has been proved to be optimally competitive for the weighted $2$-server problem on uniform metrics \cite{ChrobakS04}, and which is the best known algorithm for several other problems \cite{KoutsoupiasP95, Burley96}.

The introduction of different costs for replacements at different cache locations seems to make the caching problem notoriously hard. This is certified by the fact that attempts to develop algorithms better than the one by Fiat and Ricklin \cite{FiatR94} have given negligible success even with $k=3$. For $k=2$, Sitters \cite{Sitters14} has shown that the generalized work function algorithm is competitive for the generalized server problem on arbitrary metrics, which subsumes the weighted $2$-server problem. He has also expressed a possibility of the non-existence of a competitive algorithm for $k>2$. All this is in a striking contrast with the problem of weighted caching, where the pages (points) have costs instead of cache locations (servers). For the weighted caching problem $k$-competitive deterministic and $O(\log k)$-competitive randomized algorithms have been discovered \cite{chrobak_sundar_SODA90, Manasse_JOA90, Young94, Young98, BansalBN12, AdamaszekCER12}, even when the pages have different sizes, matching the respective lower bounds.

\section*{Acknowledgment}
The authors would like to thank Nikhil Bansal for pointing them to some references.

\bibliographystyle{plain}
\bibliography{../../../mssms/references.bib,../../references.bib,../../../caching/references.bib}

\section*{Appendix}
\appendix

\section{Proof of Lemma \ref{lemma_f}}\label{sec_proof_lemma_f}

Throughout this section, we assume $p=(p_1,\ldots,p_n)$ is such that $p_1\geq\cdots\geq p_n>0$. In order to prove Lemma \ref{lemma_f}, we use the Gauss-Seidel trick on the system given by equations (\ref{eqn_f1}) and (\ref{eqn_f2}). In every iteration, we calculate an approximation to $f_S(p)$ in decreasing order of $|S|$. For $S\subseteq[n]$ containing $n$, we take $f^0_S(p)=0$ if $S\neq[n]$ and $f^t_{[n]}(p)=1+\sum_{j=1}^{n-1}\mathbf{I}^{[n-1]\setminus\{j\}}_{[n-1]}(p)$ for all $t$. Having obtained $f^{t-1}_{S'}(p)$ for each such $S'$, and $f^t_{S'}(p)$ for each such $S'\supseteq S\neq[n]$, we obtain $f^t_S(p)$ using the following update rule.
\begin{equation}\label{eqn_iter}
\left(p_i+\sum_{j\in S}p_j\right)f^t_S(p)=p_if^t_{S\cup\{i\}}(p)+\sum_{j\in S\setminus\{n\}}p_jf^{t-1}_{S\setminus\{j\}}(p)
\end{equation}
The system given by equations (\ref{eqn_f1}) and (\ref{eqn_f2}) becomes strictly diagonally dominant under the substitution $X_j=p_j$, since each $p_j>0$. Therefore, the approximations converge to the solution of the system. That is, $\lim_{t\to\infty}f^t_S(p)=f_S(p)$. We prove some claims about these iterated solutions.

\begin{claim}[Iteration monotonicity]\label{claim_f_nondec}
$f^t_S(p)$ is non-decreasing with respect to $t$, in other words, $f^{t-1}_S(p)\leq f^t_S(p)$ for every $S$.
\end{claim}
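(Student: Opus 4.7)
The plan is to prove the claim by a double induction: an outer induction on the iteration index $t$, and an inner induction on $|S|$ in \emph{decreasing} order. Note that the update rule (\ref{eqn_iter}) is a convex combination (up to a positive normalizing factor) whose right-hand side uses values of $f^t$ on strictly larger sets $S\cup\{i\}$ and values of $f^{t-1}$ on strictly smaller sets $S\setminus\{j\}$; this nested dependency is what invites the double induction.

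For the base case $t=1$ I would show $f^1_S(p)\geq 0 = f^0_S(p)$ for every $S\subsetneq[k]$ with $k\in S$, while $f^1_{[k]}(p)=f^0_{[k]}(p)$ holds trivially because $f^t_{[k]}$ is defined to be constant in $t$. The inner induction on decreasing $|S|$ starts from the maximal sets $[k]\setminus\{i\}$ with $i<k$: plugging into (\ref{eqn_iter}) gives
\[
f^1_{[k]\setminus\{i\}}(p)\;=\;\frac{p_i}{p_i+\sum_{j\in[k]\setminus\{i\}}p_j}\,f^1_{[k]}(p),
\]
which is non-negative provided $f^1_{[k]}(p)\geq 0$. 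That in turn follows from the outer induction (on $k$) governing the recursive definition of $\varphi_S$, which yields non-negativity of the currents $\mathbf{I}^{[k-1]\setminus\{j\}}_{[k-1]}$; I would invoke this as already-established at the time of defining $f_{[k]}$. For smaller $S$, the term $\sum_{j\in S\setminus\{k\}}p_jf^0_{S\setminus\{j\}}(p)$ vanishes and the term $p_if^1_{S\cup\{i\}}(p)$ is non-negative by the inner induction hypothesis, so $f^1_S(p)\geq 0$.

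For the outer inductive step I would assume $f^{t-1}_S(p)\leq f^t_S(p)$ for all relevant $S$ and prove $f^t_S(p)\leq f^{t+1}_S(p)$, again by inner induction on decreasing $|S|$. The case $S=[k]$ is trivial. For $S\neq[k]$, subtracting (\ref{eqn_iter}) at iterations $t$ and $t+1$ gives
\[
\Bigl(p_i+\sum_{j\in S}p_j\Bigr)\bigl(f^{t+1}_S-f^t_S\bigr)(p)\;=\;p_i\bigl(f^{t+1}_{S\cup\{i\}}-f^t_{S\cup\{i\}}\bigr)(p)\;+\;\sum_{j\in S\setminus\{k\}}p_j\bigl(f^t_{S\setminus\{j\}}-f^{t-1}_{S\setminus\{j\}}\bigr)(p).
\]
The first summand is non-negative by the inner induction hypothesis (since $|S\cup\{i\}|>|S|$) and the second by the outer induction hypothesis; dividing by the positive coefficient on the left completes the step.

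I do not anticipate a substantial obstacle: the argument is essentially that a Gauss--Seidel iteration started from the (pointwise) smallest admissible initializer is monotonically increasing, provided the off-diagonal coefficients are non-positive when written in the form $Af=b$ (here they are $-p_i$ and $-p_j$), which is exactly the situation at hand. The only delicate point is justifying $f^1_{[k]}(p)\geq 0$ in the base case, and this is where one relies on the outer recursion on $k$ to guarantee that the currents entering the definition (\ref{eqn_f2}) are non-negative; in context this is either already known or provable by the same bookkeeping.
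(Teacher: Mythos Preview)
Your proposal is correct and follows essentially the same approach as the paper: a double induction on $t$ (outer) and on $|S|$ in decreasing order (inner), using equation~(\ref{eqn_iter}) to propagate the monotonicity, with the same reliance on $f^t_{[k]}(p)\geq 0$ in the base case. If anything, you are slightly more explicit than the paper about why $f^t_{[k]}(p)\geq 0$ holds, correctly tracing it back to the outer recursion on $k$ and the non-negativity of the level-$(k-1)$ currents.
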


\begin{proof}
We prove by induction on $t$, and reverse induction on $|S|$. The claim is obvious for $t=1$, since $f^t_{[n]}(p)\geq 0$ does not change with $t$, and for any other $S$, $f^0_S(p)=0$ and $f^1_S(p)\geq 0$. Now, assuming the claim for $t-1$, and for all $S'\supseteq S$ in the current ($t^{\text{\tiny{th}}}$) iteration, we have for $S\neq[n]$
\begin{eqnarray*}
\left(p_i+\sum_{j\in S}p_j\right)f^t_S(p) & = & p_if^t_{S\cup\{i\}}(p)+\sum_{j\in S\setminus\{n\}}p_jf^{t-1}_{S\setminus\{j\}}(p)\\
 & \geq & p_if^{t-1}_{S\cup\{i\}}(p)+\sum_{j\in S\setminus\{n\}}p_jf^{t-2}_{S\setminus\{j\}}(p)\\
 & = & \left(p_i+\sum_{j\in S}p_j\right)f^{t-1}_S(p)
\end{eqnarray*}
\end{proof}

We now derive an equation which will be used repeatedly in the next claim. Let $i$ be the smallest integer not in $S$, and let $i'<n$, $i'\neq i$ also not be in $S$. Then the smallest integer not in $S\cup\{i'\}$ is $i$. Therefore, by equation (\ref{eqn_iter}) applied to $S\cup\{i'\}$, we have
\begin{equation}\label{eqn_common1}
\left(p_i+p_{i'}+\sum_{j\in S}p_j\right)f^t_{S\cup\{i'\}}(p)=p_if^t_{S\cup\{i,i'\}}(p)+p_{i'}f^{t-1}_S(p)+\sum_{j\in S\setminus\{n\}}p_jf^{t-1}_{S\cup\{i'\}\setminus\{j\}}(p)
\end{equation}
Adding $p_{i'}f^t_S(p)$ to both sides of equation (\ref{eqn_iter}), we get
\begin{equation}\label{eqn_common2}
\left(p_i+p_{i'}+\sum_{j\in S}p_j\right)f^t_S(p)=p_if^t_{S\cup\{i\}}(p)+p_{i'}f^t_S(p)+\sum_{j\in S\setminus\{n\}}p_jf^{t-1}_{S\setminus\{j\}}(p)
\end{equation}
Subtracting (\ref{eqn_common2}) from (\ref{eqn_common1}), we get
\begin{eqnarray}\label{eqn_common_expr}
\left(p_i+p_{i'}+\sum_{j\in S}p_j\right)\left(f^t_{S\cup\{i'\}}(p)-f^t_S(p)\right) & = & p_i\left(f^t_{S\cup\{i,i'\}}(p)-f^t_{S\cup\{i\}}(p)\right)-p_{i'}\left(f^t_S(p)-f^{t-1}_S(p)\right)\nonumber\\
 & & +\sum_{j\in S\setminus\{n\}}p_j\left(f^{t-1}_{S\cup\{i'\}\setminus\{j\}}(p)-f^{t-1}_{S\setminus\{j\}}(p)\right)
\end{eqnarray}

The following claim states a version of Lemma \ref{lemma_f} for the iterated solutions. Lemma \ref{lemma_f} follows easily from this, taking limit as $t\rightarrow\infty$.

\begin{claim}\label{claim_f_t}
Let $S=[n]\setminus\{l_1,\ldots,l_m\}$ be such that $n\in S$, where $l_1<\cdots<l_m<n$. Then the following are true.
\begin{enumerate}
\item $f^t_{S\cup\{l_i\}}(p)\geq f^t_S(p)$ for $i=1,\ldots,m$
\item $p_{l_i}\left(f^t_{S\cup\{l_i\}}(p)-f^t_S(p)\right)\leq p_{l_{i+1}}\left(f^t_{S\cup\{l_{i+1}\}}(p)-f^t_S(p)\right)$ for $i=1,\ldots,m-1$
\end{enumerate}
\end{claim}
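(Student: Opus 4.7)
The plan is to proceed by double induction: an outer induction on the iteration index $t$, and for each fixed $t$ a reverse induction on $|S|$, mirroring the order in which the Gauss--Seidel iteration (\ref{eqn_iter}) computes the $f^t_{S'}$'s. The base case $t=0$ is trivial, since $f^0_{S'}=0$ whenever $S'\neq[k]$, and for every relevant $S$ both parts of the claim collapse to statements about zero quantities.

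For the inductive step, fix $t\geq 1$ and $S=[k]\setminus\{l_1,\ldots,l_m\}$ with $l_1$ the smallest missing element, and abbreviate $q_l:=p_l(f^t_{S\cup\{l\}}(p)-f^t_S(p))$ so that part~1 reads $q_l\geq 0$ and part~2 reads $q_{l_i}\leq q_{l_{i+1}}$. Part~1 in the special case $l=l_1$ will fall out of the update rule (\ref{eqn_iter}) directly: after rearrangement one obtains
\[ q_{l_1}=p_k\,f^t_S(p)+\sum_{j\in S\setminus\{k\}}p_j\bigl(f^t_S(p)-f^{t-1}_{S\setminus\{j\}}(p)\bigr), \]
and every term is non-negative, since $f^t_S(p)\geq 0$ and, by part~1 of the outer hypothesis applied to $S\setminus\{j\}$ with missing element $j$ together with Claim~\ref{claim_f_nondec}, one has $f^t_S(p)\geq f^{t-1}_S(p)\geq f^{t-1}_{S\setminus\{j\}}(p)$.

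For a general missing $l$, multiplying Claim~\ref{claim_common_expr} through by $p_l$ writes $q_l$ as the quotient
\[ q_l=\frac{R_l}{p_{l_1}+p_l+P_S},\qquad R_l:=p_{l_1}\hat{q}_l-p_l^2 D+\sum_{j\in S\setminus\{k\}}p_j\,\tilde{q}^{(j)}_l, \]
where $P_S=\sum_{j\in S}p_j$, $D=f^t_S(p)-f^{t-1}_S(p)\geq 0$ by Claim~\ref{claim_f_nondec}, $\hat{q}_l=p_l(f^t_{S\cup\{l,l_1\}}(p)-f^t_{S\cup\{l_1\}}(p))$ is the $q$-analogue at the larger set $S\cup\{l_1\}$ in iteration $t$, and $\tilde{q}^{(j)}_l=p_l(f^{t-1}_{S\cup\{l\}\setminus\{j\}}(p)-f^{t-1}_{S\setminus\{j\}}(p))$ is its analogue at the smaller set $S\setminus\{j\}$ in iteration $t-1$. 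The crux of the argument is that $R_l$ is non-decreasing in $l$ while the denominator $p_{l_1}+p_l+P_S$ is non-increasing in $l$: $\hat{q}_l$ is non-decreasing by part~2 of the inductive hypothesis at $S\cup\{l_1\}$ (noting $\hat{q}_{l_1}=0$ and that part~1 at the same level ensures $\hat{q}_{l_i}\geq 0$ for $i\geq 2$); each $\tilde{q}^{(j)}_l$ is non-decreasing by part~2 at iteration $t-1$ for $S\setminus\{j\}$; and $-p_l^2 D$ is non-decreasing because the ambient assumption that $p$ is non-increasing forces $p_l^2$ to be non-increasing in $l$.

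Putting the pieces together, the $l=l_1$ computation yields $R_{l_1}=(2p_{l_1}+P_S)q_{l_1}\geq 0$, and monotonicity of $R_l$ then propagates this to $R_l\geq 0$ for every missing $l$. Dividing a non-decreasing non-negative numerator by a positive non-increasing denominator produces a non-decreasing sequence, which is exactly part~2, and part~1 for the remaining indices then follows from $q_{l_i}\geq q_{l_1}\geq 0$. I expect the main obstacle to be the negative $-p_l^2 D$ term inside $R_l$: it is precisely the non-increasing assumption on $p$ that prevents this term from destroying the monotonicity of $R_l$, and without it the argument would collapse.
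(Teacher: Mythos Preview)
Your double induction and the idea of writing $q_l=R_l/(p_{l_1}+p_l+P_S)$ via Claim~\ref{claim_common_expr}, and then comparing the $R_l$'s termwise, is exactly the paper's route for $i\geq 2$. The gap is at the base step $i=1$ of part~2.

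Claim~\ref{claim_common_expr} is derived by writing the Gauss--Seidel update for $f^t_{S\cup\{l\}}$ with pivot $m=l_1$, which is only valid when $l_1\notin S\cup\{l\}$, i.e.\ when $l\neq l_1$. If $l=l_1$, the update for $f^t_{S\cup\{l_1\}}$ pivots on $l_2$, not $l_1$, and the displayed identity no longer holds. Consequently your assertion $R_{l_1}=(2p_{l_1}+P_S)q_{l_1}$ is false in general; already for $k=3$, $S=\{3\}$ (so $S\setminus\{k\}=\emptyset$) one has $R_{l_1}=-p_1^2\bigl(f^t_{\{3\}}-f^{t-1}_{\{3\}}\bigr)\leq 0$, while $q_{l_1}=p_3 f^t_{\{3\}}\geq 0$. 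So you cannot seed the monotonicity chain for $R_l$ with $R_{l_1}\geq 0$, and without $R_{l_i}\geq 0$ the step ``non-decreasing numerator over non-increasing positive denominator is non-decreasing'' fails (take $a=-2$, $c=-1$, $b=4$, $d=1$).

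What is actually needed is a direct proof that $q_{l_1}\leq q_{l_2}$: the paper does this by writing the genuine update for $f^t_{S\cup\{l_1\}}$ (pivot $l_2$) alongside that for $f^t_S$ (pivot $l_1$), obtaining an expression for $(p_{l_1}+p_{l_2}+P_S)q_{l_1}$ different from your $R_{l_1}$, and comparing it with $R_{l_2}$ term by term; the extra cross-terms $p_{l_1}p_{l_2}f^t_{S\cup\{l_1,l_2\}}$ cancel, and the residual inequality reduces to $p_{l_2}\bigl(f^t_{S\cup\{l_1\}}-f^{t-1}_S\bigr)\leq p_{l_1}\bigl(f^t_{S\cup\{l_1\}}-f^{t-1}_S\bigr)$. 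Once $q_{l_1}\leq q_{l_2}$ is in hand, part~1 for $i=2$ follows, hence $R_{l_2}=(p_{l_1}+p_{l_2}+P_S)q_{l_2}\geq 0$, and from there your monotonicity-of-$R_l$ argument (which matches the paper's for $i\geq 2$) goes through.
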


\begin{proof}
We prove this claim by induction on $t$. For $t=0$, the claim is obvious. So suppose $t>0$. We first consider the case when $i=1$. 

Consider Part 1 of the claim. The smallest integer not in $S$ is $l_1$. Rewriting (\ref{eqn_iter}), we have
\[p_{l_1}\left(f^t_{S\cup\{l_1\}}(p)-f^t_S(p)\right)=p_nf^t_S(p)+\sum_{j\in S\setminus\{n\}}p_j\left(f^t_S(p)-f^{t-1}_{S\setminus\{j\}}(p)\right)\]
By Claim \ref{claim_f_nondec} and induction on $t$, we have $f^t_S(p)\geq0$ and $f^t_S(p)-f^{t-1}_{S\setminus\{j\}}(p)\geq f^{t-1}_S(p)-f^{t-1}_{S\setminus\{j\}}(p)\geq0$. Hence $f^t_{S\cup\{l_1\}}(p)-f^t_S(p)\geq0$, and thus Part 1 is proved for $i=1$, for any $S$.

Consider Part 2 of the claim. When $|S|=n-1$, there is nothing to prove in Part 2. So assume $|S|<n-1$. $l_1$ and $l_2$ are respectively the smallest and second smallest integers not in $S$, and thus, $l_2$ is the smallest integer not in $S\cup\{l_1\}$. Therefore, we have
\begin{eqnarray*}
\left(p_{l_1}+\sum_{j\in S}p_j\right)f^t_S(p) & = & p_{l_1}f^t_{S\cup\{l_1\}}(p)+\sum_{j\in S\setminus\{n\}}p_jf^{t-1}_{S\setminus\{j\}}(p)\\
\left(p_{l_2}+p_{l_1}+\sum_{j\in S}p_j\right)f^t_{S\cup\{l_1\}}(p) & = & p_{l_2}f^t_{S\cup\{l_1,l_2\}}(p)+p_{l_1}f^{t-1}_S(p)+\sum_{j\in S\setminus\{n\}}p_jf^{t-1}_{S\cup\{l_1\}\setminus\{j\}}(p)
\end{eqnarray*}
Thus,
\begin{eqnarray}
\left(p_{l_2}+p_{l_1}+\sum_{j\in S}p_j\right)p_{l_1}\left(f^t_{S\cup\{l_1\}}(p)-f^t_S(p)\right) & = & p_{l_2}p_{l_1}f^t_{S\cup\{l_1,l_2\}}(p)-p_{l_1}^2f^t_{S\cup\{l_1\}}(p)\nonumber\\
 & & -p_{l_2}p_{l_1}f^t_S(p)+p_{l_1}^2f^{t-1}_S(p)\label{eqn_l_1}\\
 & & +\sum_{j\in S\setminus\{n\}}p_jp_{l_1}\left(f^{t-1}_{S\cup\{l_1\}\setminus\{j\}}(p)-f^{t-1}_{S\setminus\{j\}}(p)\right)\nonumber
\end{eqnarray}
Further, from equation (\ref{eqn_common_expr}),
\begin{eqnarray}
\left(p_{l_1}+p_{l_2}+\sum_{j\in S}p_j\right)p_{l_2}\left(f^t_{S\cup\{l_2\}}(p)-f^t_S(p)\right) & = & p_{l_1}p_{l_2}\left(f^t_{S\cup\{l_2,l_1\}}(p)-f^t_{S\cup\{l_1\}}(p)\right)\nonumber\\
 & & -p_{l_2}^2\left(f^t_S(p)-f^{t-1}_S(p)\right)\label{eqn_l_2}\\
 & & +\sum_{j\in S\setminus\{n\}}p_jp_{l_2}\left(f^{t-1}_{S\cup\{l_2\}\setminus\{j\}}(p)-f^{t-1}_{S\setminus\{j\}}(p)\right)\nonumber
\end{eqnarray}
We need to prove (\ref{eqn_l_1}) is at most (\ref{eqn_l_2}). By induction on $t$, for each $j\in S\setminus\{n\}$, we have
\[p_jp_{l_1}\left(f^{t-1}_{S\cup\{l_1\}\setminus\{j\}}(p)-f^{t-1}_{S\setminus\{j\}}(p)\right)\leq p_jp_{l_2}\left(f^{t-1}_{S\cup\{l_2\}\setminus\{j\}}(p)-f^{t-1}_{S\setminus\{j\}}(p)\right)\]
Canceling $p_{l_2}p_{l_1}f^t_{S\cup\{l_1,l_2\}}(p)$, we are left to prove
\[-p_{l_1}^2f^t_{S\cup\{l_1\}}(p)-p_{l_2}p_{l_1}f^t_S(p)+p_{l_1}^2f^{t-1}_S(p)\leq-p_{l_2}p_{l_1}f^t_{S\cup\{l_1\}}(p)-p_{l_2}^2\left(f^t_S(p)-f^{t-1}_S(p)\right)\]
Using $f^t_S(p)-f^{t-1}_S(p)\geq0$ by Claim \ref{claim_f_nondec}, and the fact that $p_{l_1}\geq p_{l_2}>0$, we have
\[-p_{l_2}p_{l_1}\left(f^t_S(p)-f^{t-1}_S(p)\right)\leq-p_{l_2}^2\left(f^t_S(p)-f^{t-1}_S(p)\right)\]
Therefore, it is sufficient to prove
\[-p_{l_1}^2f^t_{S\cup\{l_1\}}(p)-p_{l_2}p_{l_1}f^{t-1}_S(p)+p_{l_1}^2f^{t-1}_S(p)\leq-p_{l_2}p_{l_1}f^t_{S\cup\{i\}}(p)\]
that is,
\[p_{l_2}p_{l_1}f^t_{S\cup\{l_1\}}(p)-p_{l_2}p_{l_1}f^{t-1}_S(p)\leq p_{l_1}^2f^t_{S\cup\{l_1\}}(p)-p_{l_1}^2f^{t-1}_S(p)\]
that is,
\[p_{l_2}\left(f^t_{S\cup\{l_1\}}(p)-f^{t-1}_S(p)\right)\leq p_{l_1}\left(f^t_{S\cup\{l_1\}}(p)-f^{t-1}_S(p)\right)\]
This is true, since by Claim \ref{claim_f_nondec} and induction on $t$, we have $f^t_{S\cup\{l_1\}}(p)-f^{t-1}_S(p)\geq f^{t-1}_{S\cup\{l_1\}}(p)-f^{t-1}_S(p)\geq0$, and $p_{l_2}\leq p_{l_1}$.

We are now ready to prove the claim completely for any $S$ and $i$. We perform reverse induction on $|S|$, and then on $i$. We have proved the claim when $|S|=n-1$. So assume that $|S|<n-1$, and that the claim holds for all supersets of $S$. We have proved the claim for $i=1$, so assume $i>1$, and that the claim holds for $i-1$.

For Part 1, by induction on $i$, we have $f^t_{S\cup\{l_{i-1}\}}(p)-f^t_S(p)\geq0$ and $p_{l_{i-1}}(f^t_{S\cup\{l_{i-1}\}}(p)-f^t_S(p))\leq p_{l_i}(f^t_{S\cup\{l_i\}}(p)-f^t_S(p))$. Therefore $f^t_{S\cup\{l_i\}}(p)\geq f^t_S(p)$.

We now prove Part 2 for $i<m$. Since $f^t_{S\cup\{l_i\}}(p)\geq f^t_S(p)$ and $p_{l_i}\geq p_{l_{i+1}}$, it is sufficient to prove
\[\left(p_{l_1}+p_{l_i}+\sum_{j\in S}p_j\right)p_{l_i}\left(f^t_{S\cup\{l_i\}}(p)-f^t_S(p)\right)\leq\left(p_{l_1}+p_{l_{i+1}}+\sum_{j\in S}p_j\right)p_{l_{i+1}}\left(f^t_{S\cup\{l_{i+1}\}}(p)-f^t_S(p)\right)\]
Substituting from equation (\ref{eqn_common_expr}), we are left to prove
\[p_{l_1}p_{l_i}\left(f^t_{S\cup\{l_i,l_1\}}(p)-f^t_{S\cup\{l_1\}}(p)\right)-p_{l_i}^2\left(f^t_S(p)-f^{t-1}_S(p)\right)+\sum_{j\in S\setminus\{n\}}p_jp_{l_i}\left(f^{t-1}_{S\cup\{l_i\}\setminus\{j\}}(p)-f^{t-1}_{S\setminus\{j\}}(p)\right)\]
\[\leq\]
\[p_{l_1}p_{l_{i+1}}\left(f^t_{S\cup\{l_{i+1},l_1\}}(p)-f^t_{S\cup\{l_1\}}(p)\right)-p_{l_{i+1}}^2\left(f^t_S(p)-f^{t-1}_S(p)\right)\]
\[+\sum_{j\in S\setminus\{n\}}p_jp_{l_{i+1}}\left(f^{t-1}_{S\cup\{l_{i+1}\}\setminus\{j\}}(p)-f^{t-1}_{S\setminus\{j\}}(p)\right)\]
But by reverse induction on $|S|$, we have
\[p_{l_i}\left(f^t_{S\cup\{l_i,l_1\}}(p)-f^t_{S\cup\{l_1\}}(p)\right)\leq p_{l_{i+1}}\left(f^t_{S\cup\{l_{i+1},l_1\}}(p)-f^t_{S\cup\{l_1\}}(p)\right)\]
and by induction on $t$, for each $j\in S\setminus\{n\}$, we have
\[p_jp_{l_i}\left(f^{t-1}_{S\cup\{l_i\}\setminus\{j\}}(p)-f^{t-1}_{S\setminus\{j\}}(p)\right)\leq p_jp_{l_{i+1}}\left(f^{t-1}_{S\cup\{l_{i+1}\}\setminus\{j\}}(p)-f^{t-1}_{S\setminus\{j\}}(p)\right)\]
Further since $f^t_S(p)-f^{t-1}_S(p)\geq0$ by Claim \ref{claim_f_nondec}, and $p_{l_i}\geq p_{l_{i+1}}>0$, the claim stands proved.
\end{proof}

\begin{proof}[\textbf{Proof of Lemma \ref{lemma_f}}]
Let $S=[n]\setminus\{l_1,\ldots,l_m\}$, where $l_1<\cdots<l_m<n$, be a set containing $n$. Suppose $p=(p_1,\ldots,p_n)$ with $p_1\geq\cdots\geq p_n>0$. Then for all $i,i'\in\{l_1,\ldots,l_m\}$, $i<i'$, we are required to prove
\[p_i\left(f_{S\cup\{i\}}(p)-f_S(p)\right)\leq p_{i'}\left(f_{S\cup\{i'\}}(p)-f_S(p)\right)\]
Note that it is sufficient to prove the claim for $i=l_j$ and $i'=l_{j+1}$ for each $j$. From Part 2 of Claim \ref{claim_f_t}, we have for every $t$, $p_{l_j}(f^t_{S\cup\{l_j\}}(p)-f^t_S(p))\leq p_{l_{j+1}}(f^t_{S\cup\{l_{j+1}\}}(p)-f^t_S(p))$. Taking limit as $t\rightarrow\infty$, we get $p_{l_j}(f_{S\cup\{l_j\}}(p)-f_S(p))\leq p_{l_{j+1}}(f_{S\cup\{l_{j+1}\}}(p)-f_S(p))$.
\end{proof}

Similarly, using Part 1 of Claim \ref{claim_f_t} and taking limit as $t\rightarrow\infty$, we get $f_{S\cup\{i\}}(p)\geq f_S(p)$ if $i\notin S$. Using this fact and (\ref{eqn_current_f}), it is easy to prove inductively that as long as $p_1\geq\cdots\geq p_n>0$,
\begin{equation}\label{eqn_f_monotone}
\mathbf{I}^S_{S\cup\{i\}}(p)\geq0\text{ if }i\notin S
\end{equation}
In other words, all currents are non-negative.

\section{Proof of Lemma \ref{lem_currents}}

Assume $p=(p_1,\ldots,p_n)$ is such that $p_1\geq\cdots\geq p_n>0$. In order to prove Lemma \ref{lem_currents}, we require the following two claims, apart from Lemma \ref{lemma_f}.

\begin{claim}[Symmetry]\label{claim_symm}
For any $n$, let the sequence $p$ be such that $p_n=p_{n-1}$. Then we have for any $S\subseteq[n-1]$ such that $n-1\in S$
\begin{enumerate}
\item $f_S(p)=f_{S\setminus\{n-1\}\cup\{n\}}(p)$.
\item $\varphi_S(p)=\varphi_{S\setminus\{n-1\}\cup\{n\}}(p)$
\end{enumerate}
\end{claim}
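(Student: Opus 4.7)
The plan is to derive part~2 from part~1 and focus on proving part~1. For part~2, applying (\ref{eqn_def_varphi}) to each side gives $\varphi_S(p)=\varphi_{S\setminus\{k-1\}}(p)+f_S(p)/p_{k-1}$ and $\varphi_{S\setminus\{k-1\}\cup\{k\}}(p) = \varphi_{S\setminus\{k-1\}}(p) + f_{S\setminus\{k-1\}\cup\{k\}}(p)/p_k$. The ``parent'' terms $\varphi_{S\setminus\{k-1\}}(p)$ are identical (their arity is at most $k-2$, so neither depends on $p_{k-1}$ or $p_k$), and at $p_{k-1}=p_k$ the two sides of part~2 therefore differ by exactly $(f_S - f_{S\setminus\{k-1\}\cup\{k\}})/p_{k-1}$, which vanishes by part~1.

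To prove part~1, I introduce the candidate values $h_S(p_1,\ldots,p_{k-1}) := f_{S\setminus\{k-1\}\cup\{k\}}(p_1,\ldots,p_{k-1},p_{k-1})$ for $S\subseteq[k-1]$ with $k-1\in S$, and set $g_S := h_S - f_S(p_1,\ldots,p_{k-1})$. The goal is to show $\{g_S\}$ satisfies a strictly diagonally dominant homogeneous linear system, forcing $g\equiv 0$. For $S\neq[k-1]$, let $i<k-1$ be the smallest element missing from $S$ and put $T := S\setminus\{k-1\}\cup\{k\}$; since $k-1\notin T$, the smallest element of $[k]$ missing from $T$ is still $i$. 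Writing (\ref{eqn_f1}) for $f_T$ in universe $[k]$ and evaluating at $p_k=p_{k-1}$, and noting that under the swap $k-1\leftrightarrow k$ the sets $T\cup\{i\}$ and $T\setminus\{j\}$ (for $j\in S\setminus\{k-1\}$) correspond to $S\cup\{i\}$ and $S\setminus\{j\}$ respectively, one obtains exactly (\ref{eqn_f1}) for $h_S$ in universe $[k-1]$. Subtracting (\ref{eqn_f1}) satisfied by $f_S$ yields a homogeneous relation for $g_S$ whose diagonal coefficient exceeds the sum of off-diagonal ones by $p_{k-1}>0$.

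The main obstacle is the case $S=[k-1]$: here $f_{[k-1]}$ is given by the base formula (\ref{eqn_f2}), while $h_{[k-1]}=f_{[k-2]\cup\{k\}}(\tilde p)$ satisfies an (\ref{eqn_f1})-type recurrence with $i=k-1$ that involves $f_{[k]}$. The key algebraic step will be to establish the identity
\[ \Bigl(2p_{k-1}+\textstyle\sum_{j=1}^{k-2}p_j\Bigr)f_{[k-1]} = p_{k-1}\,f_{[k]}(p_1,\ldots,p_{k-1}) + \textstyle\sum_{j=1}^{k-2}p_j\,f_{[k-1]\setminus\{j\}}, \]
which I will verify by expanding $f_{[k]}$ via (\ref{eqn_f2}), splitting off the $j=k-1$ summand (which equals $f_{[k-1]}$ by (\ref{eqn_def_varphi})), and using (\ref{eqn_current_f}) to rewrite $\mathbf{I}^{[k-1]\setminus\{j\}}_{[k-1]}=\mathbf{I}^{[k-2]\setminus\{j\}}_{[k-2]}+(p_j/p_{k-1})(f_{[k-1]}-f_{[k-1]\setminus\{j\}})$ for $j<k-1$. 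Subtracting this identity from the system-$[k]$ equation satisfied by $h_{[k-1]}$ gives a homogeneous equation for $g_{[k-1]}$ with diagonal slack $2p_{k-1}>0$. Thus every row of the linear system for $\{g_S\}$ is strictly diagonally dominant, so $g\equiv 0$, which gives $h_S=f_S$ and hence part~1.
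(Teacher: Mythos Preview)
Your proof is correct and follows essentially the same route as the paper: part~2 is reduced to part~1 via (\ref{eqn_def_varphi}), and part~1 is proved by showing that the swapped values satisfy the same strictly diagonally dominant linear system, with the case $S=[k-1]$ handled by the identity $\bigl(2p_{k-1}+\sum_{j=1}^{k-2}p_j\bigr)f_{[k-1]}=p_{k-1}f_{[k]}+\sum_{j=1}^{k-2}p_jf_{[k-1]\setminus\{j\}}$, verified exactly as you outline using (\ref{eqn_f2}) and (\ref{eqn_current_f}). The only cosmetic difference is orientation: the paper substitutes the universe-$[k{-}1]$ values $f_{S\cup\{k-1\}\setminus\{k\}}$ into the universe-$[k]$ subsystem for $\{f_T:k\in T,\ k{-}1\notin T\}$ and checks they fit, whereas you pull the universe-$[k]$ values back into the universe-$[k{-}1]$ system and take differences; the key computation is identical.
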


\begin{proof}
Consider the subset of quantities $\{f_S(p)\suchthat n\in S\text{, }n-1\notin S\}$. These form a unique solution to the system given by
\begin{equation}\label{eqn_f1'}
\left(p_i+\sum_{j\in S}p_j\right)f_S(p)=p_if_{S\cup\{i\}}(p)+\sum_{j\in S\setminus\{n\}}p_jf_{S\setminus\{j\}}(p)
\end{equation}
for $S\neq[n-2]\cup\{n\}$, where $i<n-1$ is the smallest integer not in $S$, and
\begin{equation}\label{eqn_f2'}
\left(\sum_{j=1}^np_j\right)f_{[n-2]\cup\{n\}}(p)=p_{n-1}\left(1+\sum_{j=1}^{n-1}\mathbf{I}^{[n-1]\setminus\{j\}}_{[n-1]}(p)\right)+\sum_{j=1}^{n-2}p_jf_{[n-2]\cup\{n\}\setminus\{j\}}(p)
\end{equation}
since $1+\sum_{j=1}^{n-1}\mathbf{I}^{[n-1]\setminus\{j\}}_{[n-1]}(p)=f_{[n]}(p)$. Suppose we replace $f_S(p)$ by $f_{S\cup\{n-1\}\setminus\{n\}}(p)$ for each $S$. We will prove that equations (\ref{eqn_f1'}) and (\ref{eqn_f2'}) are still satisfied. This will imply the first part of the claim. Equation (\ref{eqn_f1'}) can be verified easily using the fact that $p_n=p_{n-1}$, and we are left to check (\ref{eqn_f2'}). That is, we have to prove
\[\left(\sum_{j=1}^np_j\right)f_{[n-1]}(p)=p_{n-1}\left(1+\sum_{j=1}^{n-1}\mathbf{I}^{[n-1]\setminus\{j\}}_{[n-1]}(p)\right)+\sum_{j=1}^{n-2}p_jf_{[n-1]\setminus\{j\}}(p)\]
Since $f_{[n-1]}(p)=\mathbf{I}^{[n-2]}_{[n-1]}(p)$ and $p_n=p_{n-1}$, we are left to prove
\[\left(\sum_{j=1}^{n-1}p_j\right)f_{[n-1]}(p)=p_{n-1}\left(1+\sum_{j=1}^{n-2}\mathbf{I}^{[n-1]\setminus\{j\}}_{[n-1]}(p)\right)+\sum_{j=1}^{n-2}p_jf_{[n-1]\setminus\{j\}}(p)\]
that is,
\[\left(\sum_{j=1}^{n-1}\frac{p_j}{p_{n-1}}\right)f_{[n-1]}(p)-\sum_{j=1}^{n-2}\frac{p_jf_{[n-1]\setminus\{j\}}(p)}{p_{n-1}}=1+\sum_{j=1}^{n-2}\mathbf{I}^{[n-1]\setminus\{j\}}_{[n-1]}(p)\]
that is,
\begin{equation}\label{eqn_claim_symm}
f_{[n-1]}(p)+\sum_{j=1}^{n-2}\frac{p_j(f_{[n-1]}(p)-f_{[n-1]\setminus\{j\}}(p))}{p_{n-1}}=1+\sum_{j=1}^{n-2}\mathbf{I}^{[n-1]\setminus\{j\}}_{[n-1]}(p)
\end{equation}
Using the fact from equation (\ref{eqn_current_f}), that
\[\mathbf{I}^{[n-1]\setminus\{j\}}_{[n-1]}(p)=\mathbf{I}^{[n-2]\setminus\{j\}}_{[n-2]}(p)+\frac{p_j(f_{[n-1]}(p)-f_{[n-1]\setminus\{j\}}(p))}{p_{n-1}}\]
and substituting this in the right hand side of (\ref{eqn_claim_symm}), we are left with
\[f_{[n-1]}(p)=1+\sum_{j=1}^{n-2}\mathbf{I}^{[n-2]\setminus\{j\}}_{[n-2]}(p)\]
which is indeed true. The second part of the claim follows from the first part and (\ref{eqn_def_varphi}).
\end{proof}

The next claim states that the function $f_S(p)$ increases, as $p_n$ decreases from $p_{n-1}$ to $0$, for fixed $p_1,\ldots,p_{n-1}$.

\begin{claim}[$p$-monotonicity]\label{claim_monotone_f}
Let $p'$ be such that $p'_j=p_j$ for $1\leq j<n$ and $p'_n\leq p_n$. Then for each $S\subseteq[n]$, $f_S(p')\geq f_S(p)$.
\end{claim}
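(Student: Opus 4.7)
My plan is to apply the Gauss-Seidel trick of feasibility checking that was used throughout Appendix A. First, note that if $k\notin S$ then $f_S$ has arity $\max S<k$, so $f_S(p)$ depends only on $p_1,\ldots,p_{\max S}$. Since $p$ and $p'$ agree on all coordinates $j<k$, we have $f_S(p')=f_S(p)$ trivially. So the interesting case is $k\in S$, on which I focus below.

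For such $S$, consider the Gauss-Seidel iterates $f^t_S(p)$ introduced at the start of Appendix A: $f^0_S(p)=0$ for $S\neq[k]$, $f^t_{[k]}(p)=1+\sum_{j=1}^{k-1}\mathbf{I}^{[k-1]\setminus\{j\}}_{[k-1]}(p)$ for every $t$, and for $S\subsetneq[k]$ with $k\in S$, equation (\ref{eqn_iter}) yields
\begin{equation*}
f^t_S(p)=\frac{p_if^t_{S\cup\{i\}}(p)+\sum_{j\in S\setminus\{k\}}p_jf^{t-1}_{S\setminus\{j\}}(p)}{p_i+\sum_{j\in S}p_j},
\end{equation*}
where $i<k$ is the smallest element not in $S$. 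The key structural observation is that $p_k$ appears in the denominator $p_i+\sum_{j\in S}p_j$ but not as a coefficient in the numerator, since the sum there is over $j\in S\setminus\{k\}$ only. Moreover $f^t_{[k]}$ depends only on $p_1,\ldots,p_{k-1}$, so $f^t_{[k]}(p')=f^t_{[k]}(p)$ for every $t$.

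I then prove by induction on $t$, with reverse induction on $|S|$ within each iteration, that $f^t_S(p')\geq f^t_S(p)$ for every $S\subseteq[k]$ with $k\in S$. The base case $t=0$ is immediate. For the inductive step with $S\subsetneq[k]$, the numerator at $p'$ is at least the numerator at $p$: by the reverse induction on $|S|$, $f^t_{S\cup\{i\}}(p')\geq f^t_{S\cup\{i\}}(p)$, and by the induction on $t$, $f^{t-1}_{S\setminus\{j\}}(p')\geq f^{t-1}_{S\setminus\{j\}}(p)$ for every $j\in S\setminus\{k\}$; the coefficients $p_i$ and $p_j$ ($j\in S\setminus\{k\}$) are unchanged between $p$ and $p'$. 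The denominator at $p'$ is at most the denominator at $p$, since $p'_k\leq p_k$ and all other terms match. Since Claim \ref{claim_f_nondec} together with non-negativity of the initial values guarantees $f^t_S\geq0$ throughout, the numerators are non-negative, and hence
\begin{equation*}
f^t_S(p')=\frac{N(p')}{D(p')}\geq\frac{N(p)}{D(p')}\geq\frac{N(p)}{D(p)}=f^t_S(p).
\end{equation*}

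Finally, letting $t\to\infty$, the Gauss-Seidel iterations converge to $f_S$ by strict diagonal dominance of the system (\ref{eqn_f1})--(\ref{eqn_f2}), so $f_S(p')=\lim_t f^t_S(p')\geq\lim_t f^t_S(p)=f_S(p)$. Since the whole argument is essentially a direct application of the iteration framework already set up in Appendix A, I do not anticipate any serious obstacle; the heart of the proof is the structural observation that $p_k$ appears only in the denominator of the recursion (\ref{eqn_iter}) for any $S\neq[k]$ containing $k$, never as a coefficient in the numerator.
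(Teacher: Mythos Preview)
Your proof is correct and follows essentially the same route as the paper: handle $k\notin S$ and $S=[k]$ as trivial cases, then run the double induction (on $t$, then reverse on $|S|$) on the Gauss-Seidel iterates, exploiting that $p_k$ occurs only in the denominator of (\ref{eqn_iter}). You are even slightly more explicit than the paper in invoking non-negativity of $f^t_S$ to justify the last inequality.
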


\begin{proof}
When $n\notin S$, and also when $S=[n]$, the claim is obvious, since in this case, $f_S(p)$ depends only on $p_1,\ldots,p_{n-1}$. Else if $n\in S\subsetneq[n]$, then we again prove that the claim holds after each iteration of the Gauss-Seidel procedure described in Appendix \ref{sec_proof_lemma_f}, implying that the claim holds for the limit. Thus, we need to prove $f^t_S(p')\geq f^t_S(p)$ for each $t$. We again use induction on $t$, and reverse induction on $|S|$. For $t=0$ the claim is trivial. For $t>0$ we have proved the claim for $S=[n]$. So let $S\subsetneq[n]$, $n\in S$ and let $i<n$ be the smallest integer not in $S$. Assuming the claim for $t-1$ and for all $S'\supseteq S$ in the current $t^{\text{\tiny{th}}}$ iteration, we have
\begin{eqnarray*}
\left(p'_i+\sum_{j\in S}p'_j\right)f^t_S(p') & = & p'_if^t_{S\cup\{i\}}(p')+\sum_{j\in S\setminus\{n\}}p'_jf^{t-1}_{S\setminus\{j\}}(p')\\
 & \geq & p_if^t_{S\cup\{i\}}(p)+\sum_{j\in S\setminus\{n\}}p_jf^{t-1}_{S\setminus\{j\}}(p)\\
 & = & \left(p_i+\sum_{j\in S}p_j\right)f^t_S(p)
\end{eqnarray*}
Note that the inequality holds because $p'_i=p_i$ and $p'_j=p_j$ for all $j\in S\setminus\{n\}$. But we also have $p'_i+\sum_{j\in S}p'_j\leq p_i+\sum_{j\in S}p_j$, and therefore, $f^t_S(p')\geq f^t_S(p)$.
\end{proof}

\begin{proof}[\textbf{Proof of Lemma \ref{lem_currents}}]
Let $S$ be a finite subset of $\mathbb{N}$, $i,j\notin S$, $i<j$ and $S\cup\{i,j\}\subseteq[n]$. Suppose $p=(p_1,\ldots,p_n)$ with $p_1\geq\cdots\geq p_n>0$. We are required to prove
\[\mathbf{I}^S_{S\cup\{i\}}(p)\leq\mathbf{I}^S_{S\cup\{j\}}(p)\]
Note that we may assume, without loss of generality, that the largest integer in $S\cup\{j\}$ is $n$. We prove the claim by induction on $n$. For $n=1$, there is nothing to prove. For $n>1$ we consider two cases: $n>j$ and $n=j$.

In the former case, the largest integer in $S$ as well as $S\cup\{i\}$ is $n$. From equation (\ref{eqn_current_f}), we need to prove
\[\mathbf{I}^{S\setminus\{n\}}_{S\cup\{i\}\setminus\{n\}}(p)+\frac{p_i}{p_n}\left(f_{S\cup\{i\}}(p)-f_S(p)\right)\leq\mathbf{I}^{S\setminus\{n\}}_{S\cup\{j\}\setminus\{n\}}(p)+\frac{p_j}{p_n}\left(f_{S\cup\{j\}}(p)-f_S(p)\right)\]
But by induction hypothesis, we have
\[\mathbf{I}^{S\setminus\{n\}}_{S\cup\{i\}\setminus\{n\}}(p)\leq\mathbf{I}^{S\setminus\{n\}}_{S\cup\{j\}\setminus\{n\}}(p)\]
and from Lemma \ref{lemma_f}, we have
\[\frac{p_i}{p_n}\left(f_{S\cup\{i\}}(p)-f_S(p)\right)\leq\frac{p_j}{p_n}\left(f_{S\cup\{j\}}(p)-f_S(p)\right)\]
Adding these inequalities, we get the desired result.

In the latter case, that is $n=j$, $S\cup\{i\}\subseteq[n-1]$. We have $\mathbf{I}^S_{S\cup\{j\}}(p)=\mathbf{I}^S_{S\cup\{n\}}(p)=f_{S\cup\{n\}}(p)$, and this increases with decreasing $p_n$, due to Claim \ref{claim_monotone_f}. On the other hand, $\mathbf{I}^S_{S\cup\{i\}}(p)$ is independent of $p_n$, since $S\subseteq S\cup\{i\}\subseteq[n-1]$. Therefore, it is sufficient to prove the claim assuming $p_n=p_{n-1}$.

We consider two sub-cases: $n-1\in S$ and $n-1\notin S$. First, suppose $n-1\in S$. Then by Claim \ref{claim_symm}, we have 
\[p_i\left(\varphi_{S\cup\{i\}}(p)-\varphi_S(p)\right)=p_i\left(\varphi_{S\cup\{i\}\setminus\{n-1\}\cup\{n\}}(p)-\varphi_{S\setminus\{n-1\}\cup\{n\}}(p)\right)\]
which means $\mathbf{I}^S_{S\cup\{i\}}(p)=\mathbf{I}^{S\setminus\{n-1\}\cup\{n\}}_{S\cup\{i\}\setminus\{n-1\}\cup\{n\}}(p)$, and
\[p_n\left(\varphi_{S\cup\{n\}}(p)-\varphi_S(p)\right)=p_i\left(\varphi_{S\cup\{n\}}(p)-\varphi_{S\setminus\{n-1\}\cup\{n\}}(p)\right)\]
which means $\mathbf{I}^S_{S\cup\{n\}}(p)=\mathbf{I}^{S\setminus\{n-1\}\cup\{n\}}_{S\cup\{n\}}(p)$. Since $i<n-1$, by the earlier case (taking $j=n-1$) we have $\mathbf{I}^{S\setminus\{n-1\}\cup\{n\}}_{S\cup\{i\}\setminus\{n-1\}\cup\{n\}}(p)\leq\mathbf{I}^{S\setminus\{n-1\}\cup\{n\}}_{S\cup\{n\}}(p)$, and hence $\mathbf{I}^S_{S\cup\{i\}}(p)\leq\mathbf{I}^S_{S\cup\{n\}}(p)$, as required.

Finally, suppose $n-1\notin S$. We already have $\mathbf{I}^S_{S\cup\{i\}}(p)\leq\mathbf{I}^S_{S\cup\{n-1\}}(p)$ by induction on $n$. By Claim \ref{claim_symm}, we have $\mathbf{I}^S_{S\cup\{n-1\}}(p)=f_{S\cup\{n-1\}}(p)=f_{S\cup\{n\}}(p)=\mathbf{I}^S_{S\cup\{n\}}(p)$, and hence, $\mathbf{I}^S_{S\cup\{i\}}(p)\leq\mathbf{I}^S_{S\cup\{n\}}(p)$, as required.
\end{proof}

\section{Proof of Lemma \ref{lem_supermodularity}}

\begin{proof}
Let $S$ be a finite subset of $\mathbb{N}$, $i,j\notin S$, and $S\cup\{i,j\}\subseteq[n]$. Suppose $p=(p_1,\ldots,p_n)$ with $p_1\geq\cdots\geq p_n>0$. Without loss of generality, assume $i<j$ and therefore $p_i\geq p_j$. Then by Lemma \ref{lem_currents}, we have
\begin{equation}\label{eqn_supermod1}
p_i(\varphi_{S\cup\{i\}}(p)-\varphi_S(p))=\mathbf{I}^S_{S\cup\{i\}}(p)\leq\mathbf{I}^S_{S\cup\{j\}}(p)
\end{equation}
If $m$ is the smallest integer not in $S$, then $m\leq i<j$, and the smallest integer not in $S\cup\{j\}$ is $m$. Hence from Lemma \ref{lem_eqn_varphi_tight}, we have
\begin{equation}\label{eqn_supermod2}
\mathbf{I}^{S\cup\{j\}}_{S\cup\{m,j\}}(p)=1+\sum_{j'\in S\cup\{j\}}\mathbf{I}^{S\cup\{j\}\setminus\{j'\}}_{S\cup\{j\}}(p)=1+\mathbf{I}^S_{S\cup\{j\}}(p)+\sum_{j'\in S}\mathbf{I}^{S\cup\{j\}\setminus\{j'\}}_{S\cup\{j\}}(p)\geq\mathbf{I}^S_{S\cup\{j\}}(p)
\end{equation}
where the inequality follows from (\ref{eqn_f_monotone}). By Lemma \ref{lem_currents}, we have
\begin{equation}\label{eqn_supermod3}
p_i(\varphi_{S\cup\{i,j\}}(p)-\varphi_{S\cup\{j\}}(p))=\mathbf{I}^{S\cup\{j\}}_{S\cup\{i,j\}}(p)\geq\mathbf{I}^{S\cup\{j\}}_{S\cup\{m,j\}}(p)
\end{equation}
Putting together (\ref{eqn_supermod1}), (\ref{eqn_supermod2}), and (\ref{eqn_supermod3}), we get
\[p_i(\varphi_{S\cup\{i\}}(p)-\varphi_S(p))\leq p_i(\varphi_{S\cup\{i,j\}}(p)-\varphi_{S\cup\{j\}}(p))\]
and since $p_i>0$,
\[\varphi_{S\cup\{i\}}(p)+\varphi_{S\cup\{j\}}(p)\leq\varphi_{S\cup\{i,j\}}(p)+\varphi_S(p)\]
as required.
\end{proof}

\end{document}